\documentclass[notitlepage]{revtex4-1}
\usepackage{amsmath}
\usepackage{amssymb,enumerate}
\usepackage{amsfonts}
\usepackage{amsthm}
\usepackage{latexsym}
\usepackage{graphicx}
\usepackage{hyperref}

\usepackage{color}

\newcommand{\be}{\begin{equation}}

\newcommand{\ee}{\end{equation}}
\newcommand{\eenn}{\end{equation*}}
\newcommand{\benn}{\begin{equation*}}
\newcommand{\bea}{\begin{eqnarray}}
\newcommand{\eea}{\end{eqnarray}}
\newcommand{\bra}[1]{\left\langle #1\right|}
\newcommand{\ket}[1]{\left|#1\right\rangle}
\newcommand{\braket}[2]{\left\langle #1,#2\right\rangle}
\newcommand{\ketbra}[2]{|#1\rangle\langle #2|}
\newcommand{\pure}[1]{\ketbra{#1}{#1}}
\newcommand{\Tr}{\mathop{\mathrm{Tr}}}
\providecommand{\one}{\leavevmode\hbox{\small1\kern-3.8pt\normalsize1}}

\newcommand{\ep}[1]{\mathrm{e}^{#1}}
\newcommand{\etal}{\textit{et al.}}

\newcommand{\Z}{\mathbb{Z}}
\newcommand{\R}{\mathbb{R}}

\newcommand{\F}{\mathcal{F}_{\gamma',s}}
\newcommand{\Fz}{\mathcal{F}_{\gamma',0}}
\newcommand{\G}{\mathcal{G}_{\gamma',s}}
\newcommand{\C}{\mathcal{C}}

\newcommand{\X}{\tilde{X}}
\newcommand{\W}{\hat{W}}
\renewcommand{\H}{\mathcal{H}}
\makeatletter
\makeatletter
\newtheorem*{rep@thm}{\rep@title}
\newcommand{\newreptheorem}[2]{%
\newenvironment{rep#1}[1]{%
 \def\rep@title{#2 \ref{##1}}%
 \begin{rep@thm}}%
 {\end{rep@thm}}}
\newtheorem{thm}{Theorem}
\newreptheorem{thm}{Theorem}
\newtheorem{defn}{Definition}
\newtheorem{example}{Example}
\newtheorem{cor}{Corollary}
\newtheorem{lem}{Lemma}

\newtheorem{prop}{Proposition}

\makeatother

\begin{document}

\title[Frustration-free stability]{Stability of Frustration-Free Hamiltonians}

\author{Spyridon \surname{Michalakis}}
\affiliation{Institute for Quantum Information and Matter,
Caltech,
Pasadena, CA 91125}
\email{spiros@caltech.edu}

\author{Justyna P. \surname{Zwolak}}
\affiliation{Department of Physics,
Nicolaus Copernicus University,
Torun, Poland}
\affiliation{Department of Physics,
Oregon State University,
Corvallis, OR 97331}
\email{letypka@gmail.com}

 \date{\today}
 
\begin{abstract}
We prove stability of the spectral gap for gapped, frustration-free Hamiltonians under general, quasi-local perturbations. We present a necessary and sufficient condition for stability, which we call {\it Local Topological Quantum Order} and show that this condition implies an area law for the entanglement entropy of the groundstate subspace. This result extends previous work by Bravyi \etal \, on the stability of topological quantum order for Hamiltonians composed of commuting projections with a common zero-energy subspace. We conclude with a list of open problems relevant to spectral gaps and topological quantum order.
\end{abstract}
\maketitle

\section{Introduction}
Recent interest in topological quantum computation has focused the attention of the condensed matter and mathematical physics community on Hamiltonians whose low-energy sectors exhibit some form of topological order. In a seminal paper \cite{kitaev:2003}, Kitaev constructed a Hamiltonian out of commuting spin-interaction terms, exhibiting four-fold groundstate degeneracy on the torus. More importantly, the four groundstates (known as the \textit{toric code}) are locally indistinguishable, implying protection against logical errors when encoding the state of two logical qubits in the toric code subspace. 

Nevertheless, if under a local perturbation of the Hamiltonian the groundstate subspace were to split appreciably, or become mixed with the non-topological higher-energy sectors, then any guarantee of protection from errors would no longer be valid. At this point, another type of stability against errors is required, this time at the level of the spacing between the low-energy and high-energy sectors and in the form of a negligible splitting in the groundstate subspace. Progress in this direction was made by Klich \cite{klich:2010} using the method of cluster expansions, and independently by Bravyi \etal \cite{bravyi:2010a,bravyi:2010b}. In particular, the latter result showed that all Hamiltonians with \textit{commuting} spin-interaction terms have negligible splitting in the groundstate subspace and a spectral gap that is stable under local perturbations, as long as the groundstates satisfy some type of \textit{local indistinguishability} and \textit{frustration-freeness} condition. Since the toric code Hamiltonian satisfies both conditions for gap stability, it follows that encoded qubits in the toric code groundstate subspace are robust against local errors, even if the Hamiltonian interactions are not precisely engineered. Moreover, at low enough temperatures, the spectral gap implies that the creation of excitations is unlikely to happen in relatively small time scales.

Even so, the toric code has a seemingly fatal flaw: at non-zero temperature, local excitations, once created, can travel around the torus to create logical errors at no extra energy cost. Motivated by this issue, soon after the results of Klich and Bravyi \etal, a series of papers appeared that focused on the beneficial effects of Anderson-type localization of the undesired excitations in the presence of impurities, or external magnetic fields \cite{kay:2011,stark:2011,wootton:2011}. 

Still, until recently, no Hamiltonian model was known that combined the desirable properties of the toric code, with a rigorous and sufficiently large lower bound on the energy barrier restricting the mobility of unwanted excitations at non-zero temperature. In fact, inspired by the pioneering work of Nussinov and Ortiz~\cite{nussinov:2008}, Yoshida showed in \cite{yoshida:2011} that such Hamiltonians would need to either forgo commutativity or break certain ``natural'' conditions, such as translational invariance or scale invariance, in order to satisfy stability at non-zero temperature. Indeed, the only known family of Hamiltonians with all the desired properties, recently presented by Haah in \cite{haah:2011}, has logical operators with fractal geometry in three dimensions and no scale invariance, thus sidestepping Yoshida's no-go theorem. Yet, despite the rigorous bound \cite{bravyi:2011} on the energy barrier of Haah's Hamiltonian, there remains a question of whether the barrier is large enough to allow for operations on the logical qubits, like read-out and error-correction, which may require times comparable to the time it takes for logical errors to appear. Moreover, it was recently shown by Hastings \cite{hastings:2011}, that all two-dimensional Hamiltonians which are a sum of commuting terms, have no topological order at non-zero temperature.

Motivated by this line of research and the larger question of the classification of quantum phases \cite{sachdev:2000, chen:2010b, BMNS:2011, schuch:2010a}, we present here a generalization of the result by Bravyi \etal, which removes the commutativity of the Hamiltonian terms as an assumption for stability. Some of the new candidate Hamiltonians now include, parent Hamiltonians of Matrix Product States (MPS) and Projected Entangled Pair States (PEPS) \cite{fannes:1992, P-GVCW:2008,schuch:2010,schuch:2010a}, as well as all other \textit{frustration-free} Hamiltonians - that is, Hamiltonians whose groundstates minimize the energy of each local interaction term. Moreover, we generalize the conditions needed for the stability of the spectral gap, in hopes that in the future, one may be able to prove an equivalent result for general, gapped Hamiltonians, whose groundstates satisfy some type of topological order.

The stability of quantum phases was already studied by Borgs \etal \cite{borgs:1996} and Datta \etal \cite{datta:1996a,datta:1996b}, where ``classical'' systems (Hamiltonians diagonalizable in a product basis) were shown to be robust against small quantum perturbations up to some low-temperature, using the methods of contour and cluster expansions. Nevertheless, here, we draw heavily from the methods developed in \cite{bravyi:2010a,bravyi:2010b}, following the more succinct format of \cite{bravyi:2010b}, which uses Hastings' powerful \textit{quasi-adiabatic continuation} \cite{hastings:2005}. We make extensive use of Lieb-Robinson bounds \cite{lieb1972}, both for the evolution of operators according to the Hamiltonians under consideration \cite{hastings:2004a,hastings:2006, nachtergaele:2006a,nachtergaele:2006b,nachtergaele:2007,nachtergaele:2009,nachtergaele:2010,PS09,PS10}, as well as Lieb-Robinson bounds on the quasi-adiabatic evolution, or \textit{spectral flow}, of gapped eigenspaces \cite{hastings:2009, hastings:2010, BMNS:2011}.

The paper is organized as follows: In Sec.~\ref{sec:ham}, we define the class of Hamiltonians whose stability we proceed to study. Sec.~\ref{sec:proof_sketch} contains the statement of the main result and the overall structure of the proof. In Sec.~\ref{sec:assumptions}, we introduce the conditions sufficient for proving stability, clarifying with examples the extend to which the conditions are also necessary. In Sec.~\ref{sec:perturbation}, we transform the perturbed Hamiltonian through a unitary process and a global energy shift in a form amenable to studying its low energy sectors. In Sec.~\ref{sec:relative_bound} we prove a relative bound on the energy added by the perturbation with respect to the energy of the unperturbed Hamiltonian. The proof of the main theorem appears in Sec.~\ref{sec:theorem} and in Sec.~\ref{sec:discussion} we close with a discussion of future work and a list of open problems. We present the proofs of Corollaries ~\ref{cor:local_tqo} and~\ref{cor:local_same} in Appendices \ref{append:local_tqo} and \ref{append:local_same}, respectively. Finally, Appendix~\ref{lr-estimates2} contains the technical details behind the estimate in Lemma~\ref{lemma:local}.


\section{\label{sec:ham} The Hamiltonian}
We study the stability of the spectral gap for Hamiltonians $H_0$ defined on $\Lambda = [0,L]^d \subset \Z^d$, with the corresponding Hilbert space being a tensor product of the local Hilbert spaces, $\H=\bigotimes_{u\in\Lambda}\H_u$, satisfying:
\begin{enumerate}
\item  ({\bf Spatially-local}) The unperturbed Hamiltonian can be written as a sum of geometrically local projectors $H_0 = \sum_{u \in \Lambda} Q_{u}$, where each interaction $Q_u$ acts non-trivially on a Hilbert space supported on $b_u(1)$, with $b_u(r)$ denoting the ball of radius $r$ with center at site $u \in \Lambda$.
\item ({\bf Periodic-boundary}) The Hamiltonian $H_0$ satisfies periodic boundary conditions.
\item ({\bf Frustration-free}) For $P_0$ the projector onto the groundstate subspace of $H_0$, we have $H_0 P_0 = 0$.
\item ({\bf Gapped}) $H_0$ has a spectral gap $\gamma_L \ge \gamma > 0$ above the groundstate subspace, for all $L \ge 2$.
\end{enumerate}

Recall that $L$ is the linear size of the support of $H_0$, so when we write $H_0$ we assume a fixed size $L$. The gap condition implies that as $H_0$ is defined (translationally invariant, or otherwise) for larger $L$, the gap $\gamma_L$ remains uniformly bounded from below. The restriction of each interaction on a ball of radius $1$ is not crucial, due to coarse-graining, as long as the interactions have constant range of support. Moreover, the assumption that the interactions are projections is not crucial either, as long as $Q_u P_0 = \lambda_u P_0$, where $\lambda_u$ is the minimum eigenvalue of $Q_u$. This condition is the most general way of defining a frustration-free Hamiltonian. From now on, however, we will assume that $\lambda_u = 0$, which follows from substituting $Q_u$ with $Q_u-\lambda_u \one$. Notice that this transformation corresponds to a global energy shift by $-\sum_u \lambda_u \, \one$, which leaves the spectral gap, and all other relevant properties of the Hamiltonian, unchanged.

\section{The main result}\label{sec:proof_sketch}
We begin this section by defining the general class of perturbations we will be studying. The only restriction is a certain notion of quasi-locality that we make precise in the following definition:
\begin{defn}\label{defn:perturbation}
We define a perturbation $V$ to have strength $J$ and decay $f(r)$, with $f(r) \le 1, \, r\ge 0$, iff we can write  $V = \sum_{u \in \Lambda} \sum_{r \ge 0}^L V_u(r)$, such that $V_u(r)$ has support on $b_u(r)$ and satisfies $\|V_u(r)\| \le J f(r)$, for some rapidly decaying function $f(r)$. We write that $V$ is a $(J,f)$-perturbation.
\end{defn}
As we discuss below, the proof involves a series of transformations on the initial $(J,f)$-perturbation $V$ into perturbations $V'$ with decay-function $f'$, making $V'$ a $(J\|f'\|, f'/\|f'\|)$-perturbation, where $\|f'\| = \sup_{r\ge 0} f'(r)$. From now on, for the sake of notational brevity, we will denote a $(J\|f\|,f/\|f\|)$-perturbation simply as a $(J,f)$-perturbation.

Moreover, we have chosen to leave the precise decay rate of the decay-function $f$ ill-defined, beyond saying that it is \emph{rapidly-decaying}, in order to avoid unnecessarily constraining the results that follow. Of course, throughout this paper, we assume that $f_0$ decays fast enough to allow for the usual Lieb-Robinson bounds \cite{hastings:2006, nachtergaele:2010} on evolutions based on $H_0+V$. This implies that fixed range, exponentially-decaying and even fast-enough polynomially-decaying perturbations are within the purview of this paper. In particular, for perturbations on $\Z^d$, a polynomial decay faster than $f(r) = (1+r)^{-(d+2)}$ suffices.

Our main result is the following:
\begin{thm}\label{thm:main_result}
Let $H_0$ be a {\it frustration-free} Hamiltonian with spectral gap $\gamma$, defined in Section \ref{sec:ham}, satisfying the {\it Local-TQO} (Definition \ref{def:local_tqo}) and {\it Local-Gap} (Definition \ref{def:local_gap}) conditions. For a $(J,f_0)$-perturbation $V$, there exist constants $J_0 >0$ and $L_0 \ge 2$, given in Proposition \ref{prop:relative_bound}, such that for $J \le J_0$ and $L \ge L_0$, the spectral gap of $H_0+V$ is bounded from below by $\gamma/2$.
\end{thm}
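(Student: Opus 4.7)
The plan is to follow the Bravyi--Hastings--Michalakis framework, adapted to the non-commuting frustration-free setting. I will exhibit a quasi-local unitary $U$ and a scalar $E$ such that $U^{\dagger}(H_0+V)U - E\,\one = H_0 + V'$, where $V'$ is block off-diagonal with respect to the local ground-subspace projectors and satisfies a \emph{relative bound} $V' \ge -\epsilon\, H_0 - \delta\,\one$ with $\epsilon,\delta$ small when $J\le J_0$. Granting this, the theorem falls out quickly: on the range of $\one-P_0$ one has $H_0 \ge \gap(\one-P_0)$, so $U^{\dagger}(H_0+V)U - E\,\one \ge (1-\epsilon)\gap - \delta$ there, while on the range of $P_0$ the same operator has norm at most $\delta$; choosing $\epsilon\le 1/4$ and $\delta\le \gap/4$ yields a spectral gap of at least $\gap/2$, which is invariant under conjugation by $U$.

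The unitary $U$ comes from Hastings' quasi-adiabatic (spectral flow) continuation along $H(s) = H_0 + sV$, $s\in[0,1]$; the generator $D(s)$ is quasi-local with decay inherited from $f_0$ through the Lieb--Robinson bound for $H(s)$. A bootstrap is required because constructing $D(s)$ uses a lower bound on the gap of $H(s)$, which is precisely the statement being proved; this is handled in the standard way, showing that the set of $s\in[0,1]$ for which the gap remains above $\gap/2$ is open (by the perturbative argument at each $s$) and closed (by continuity of the spectrum), hence all of $[0,1]$. Conjugation by $U=U(1)$ produces a transformed perturbation $\tilde V$ whose decay profile $f_1$ is a convolution of $f_0$ with Lieb--Robinson tails, and the rapid-decay assumption on $f_0$ keeps $f_1$ summable.

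Next I decompose each local term $\tilde V_u(r)$ using the \emph{local} ground projector $\Pi_A$ onto $A=b_u(r+s)$ for a mesoscopic buffer $s$: writing $\bar\Pi_A = \one-\Pi_A$, split $\tilde V_u(r)$ into the four sandwiches $\Pi_A(\cdot)\Pi_A$, $\Pi_A(\cdot)\bar\Pi_A$, $\bar\Pi_A(\cdot)\Pi_A$, $\bar\Pi_A(\cdot)\bar\Pi_A$. Local-TQO guarantees that, modulo an error decaying in $s$, the first sandwich is a scalar multiple of $\Pi_A$, so telescoping in $s$ and collecting these scalars into $E\,\one$ disposes of the $\Pi_A$-diagonal contribution. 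The off-diagonal pieces $\Pi_A(\cdot)\bar\Pi_A$ and its adjoint are then absorbed by one more round of local rotations, leaving a residual $V'$ concentrated in the $\bar\Pi_A$-block at every anchor. By Local-Gap, the local Hamiltonian $H_A := \sum_{v\in A}Q_v$ satisfies $H_A \ge \gap\,\bar\Pi_A$, so $V'_u(r) \ge -\gap^{-1}\|V'_u(r)\|\,H_A$; summing over anchors $u$ and scales $r$, with the convergent geometric multiplicity $|b_u(r+s)|$, yields $V' \ge -\epsilon H_0 - \delta\,\one$ with $\epsilon,\delta=O(J)$, which is the content of Proposition~\ref{prop:relative_bound}.

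The main obstacle is the quantitative coordination of three competing length scales: the decay of the rotated perturbation $f_1$, the Local-TQO error as a function of the buffer $s$, and the geometric multiplicity $|b_u(r+s)|$ appearing when summing anchored terms. These must combine so that $\epsilon,\delta$ are genuinely small and linear in $J$; this is what dictates the optimal choice $s=s(r)$ (typically $s\sim r$) and why $f_0$ must decay faster than polynomials of degree $d+2$, and also why $L\ge L_0$ is required (for small volumes the decay tail is not yet negligible). A secondary difficulty is preserving quasi-locality through the iterated elimination of off-diagonal parts: each rotation convolves the decay with a Lieb--Robinson tail, so only finitely many rounds can be afforded and the resulting profile must be controlled globally --- this is ultimately what the technical content of Lemma~\ref{lemma:local} (deferred to the appendix) is designed to handle.
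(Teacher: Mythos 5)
Your plan captures the right ingredients (spectral flow, bootstrap in $s$, Local-TQO to dispose of the diagonal block, Local-Gap for the relative bound), but the step where you prepare a perturbation that annihilates the local ground projectors has a genuine gap, and that is precisely where the paper's technical content lives. You split each rotated term against a local projector $\Pi_A$ and propose to remove the off-diagonal blocks $\Pi_A(\cdot)(\one-\Pi_A)$ and its adjoint ``by one more round of local rotations.'' These rotations are anchored at every $u\in\Lambda$, their supports overlap, and they do not commute: a rotation that block-diagonalizes the term at $u$ reintroduces off-diagonal pieces at neighboring anchors, and conjugating $H_0$ by such rotations destroys the frustration-free structure you need for the Local-Gap estimate. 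The paper never performs local rotations. It first applies the global filter operator $\mathcal{F}$ (Lemma~\ref{lemma:global}), integration against the compactly-Fourier-supported kernel $w_{\gamma'}$ along the full $H_s$ evolution, so that every term commutes with the \emph{global} projector $P_0(s)$; after conjugating by the spectral flow $U(s)$, each $X_u$ commutes with $P_0$. Because of this exact commutation, $X_u$ splits cleanly into $(\one-P_0)\tilde X_u(\one-P_0) + P_0\tilde X_u P_0 + c(X_u)\one$ with no cross terms, and this global block-diagonality is also what lets the spectrum of $H_0+W+\Delta+E_0\one$ separate into a $P_0$-sector of energy $O(\|\Delta\|)$ and a $(\one-P_0)$-sector of energy at least $(1-cJ)\gamma-\|\Delta\|$. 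Your local $\Pi_A$-decomposition does not give global block-diagonality, yet your closing paragraph implicitly assumes it when it splits the spectrum over $P_0$ and $\one-P_0$. The conversion of $W_u$ into pieces annihilating $P_{b_u(r)}$ is then purely algebraic, via the telescoping projectors $E_m = P_{b_u(m-1)}-P_{b_u(m)}$ with $\sum_m E_m = \one$, not via rotation.

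Two further points. First, ``collecting the scalars into $E\,\one$'' does not close with local projectors: Local-TQO gives $\Pi_A O\Pi_A \approx c_\ell(O)\,\Pi_A$, but $c_\ell(O)\Pi_A$ is not a multiple of the identity, and the residual $-c_\ell(O)(\one-\Pi_A)$ lands back in the block you are bounding. The paper takes $c(X_u)=\Tr(P_0 X_u)/\Tr P_0$, which genuinely is a scalar, and treats the diagonal residual $\Delta_u = P_0\tilde X_u P_0$ as an error term, showing in Lemma~\ref{lemma:D_u} via Local-TQO at the global scale $\ell = L-r$ that $\|\Delta_u\|$ decays rapidly in $L$; this is what makes the groundstate splitting vanishingly small in $L$, rather than merely $O(J)$. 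Second, the one-sided relative bound $V'\ge -\epsilon H_0 - \delta\,\one$ is not enough to conclude that the $P_0$-sector energies are $O(\delta)$, which your argument uses. Proposition~\ref{prop:relative_bound} proves the two-sided bound $|\langle\psi|W|\psi\rangle|\le cJ\langle\psi|H_0|\psi\rangle$, which vanishes identically on the range of $P_0$ because $H_0 P_0 = 0$, so the $P_0$-sector energy is controlled solely by $\|\Delta\|$.
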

Let us sketch the overall structure of the proof. The complete proof is presented in Section \ref{sec:theorem}.

In order to relate the spectrum of the initial Hamiltonian $H_0$ to the spectrum of the final Hamiltonian $H_0+V$, we define the following one-parameter family of gapped Hamiltonians:
\begin{defn}\label{defn:path}
Let $H_s := H_0 + s V$, where $H_0$ satisfies the assumptions of the previous sections and $V$ is a $(J,f_0)$-perturbation. We define $s_0 \in [0,1)$ to be the maximum value of $s$, such that $H_s$ has spectral gap $\gamma_s \ge \gamma' > 0$ and groundstate projection $P_0(s)$, for $s \in [0,s_0]$.
\end{defn}
Note that $s_0$ may depend on $\gamma', \, J$ and $f_0$, as well as the linear size $L$, of the system. Moreover, $\gamma'$ is fixed a priori to some value less than $\gamma$. Concretely, we will take $\gamma' = \gamma/2$, but this is not crucial. The estimates below would, otherwise, depend implicitly on our choice of $\gamma'$.

The next steps are designed with one goal in mind: To show that there exist $J_0 > 0$ and $L_0 \ge 2$, such that for all $J \le J_0$ and $L\ge L_0$, we have that the spectral gap of $H_{s_0}$ is greater than $\frac{3}{4}\gamma$. This leads to a contradiction which is traced back to the definition of $s_0$ and our initial assumption that $s_0 < 1$.

\begin{enumerate}
\item First, we show that for $s\in[0,s_0]$, there exists a unitary $U(s)$ transforming the family of gapped Hamiltonians $H_s$ into $U^\dagger(s) H_s U(s) = H_0 +\sum_u X_u$, where the quasi-local terms $X_u$ satisfy $[X_u,P_0]=0$ and $U(s) P_0(0) U^\dagger(s)=P_0(s)$.

\item In the next step, we decompose $X_u=W_u+\Delta_u+c(X_u)\one$, where $W = \sum_{u\in \Lambda} W_u$ is a $(J,w)$-perturbation satisfying $W_u(r)P_{b_u(r)}=0$, with $P_{b_u(r)}$ given in Definition \ref{defn:local_gs}. Moreover, we show that the norm of $\Delta_u$ decays rapidly in $L$ and $c(X_u)=\text{Tr}(X_uP_0)/\text{Tr}P_0$ (see Proposition \ref{prop:local_decomposition} and Lemma \ref{lemma:D_u}).

\item Combining the {\it Local-Gap} condition with properties of the perturbation $W$, we prove that there exists a constant $c$ such that for an arbitrary state $\ket{\psi}$: $|\bra{\psi} W \ket{\psi}| \le c\cdot J\, \bra{\psi} H_0 \ket{\psi}$ (see Proposition \ref{prop:relative_bound}).

\item Finally, we use the relative bound from the previous step to show that there exist $J_0 > 0$ and $L_0 \ge 2$, such that for $L\ge L_0$ and $J\le J_0$, the Hamiltonian $H_0+sV$ has a gap $\gamma_s\ge (3/4)\gamma$, for all $s\in[0,s_0]$, thus obtaining a contradiction.
\end{enumerate}


\section{Assumptions for Stability}\label{sec:assumptions}
We begin this section by defining the following projections onto low-energy subspaces:
\begin{defn}\label{defn:local_gs}
For $B = b_u(r), \, r \ge 2$, we define $P_B(\epsilon)$ to be the projection onto the subspace of eigenstates of $H_B :=\sum_{b_v(1)\subset B} Q_v$ with energy at most $\epsilon \ge 0$. We set $P_B := P_B(0)$.
\end{defn}
The above definition implies the following relations, which hold for all frustration-free systems:
\begin{cor}\label{cor:local_energy} The following claims follow from the definition of $P_B(\epsilon)$:
\begin{itemize}
\item $H_B \ge \epsilon \,(1-P_B(\epsilon))$.
\item $P_B(\epsilon) P_B(\epsilon') = P_B(\epsilon')$, for $0 \le \epsilon' \le \epsilon$ and $P_{\Lambda}(\epsilon) = P_0$, for $0\le \epsilon \le \gamma$.
\item $P_B(\epsilon)\, P_C(0) = P_C(0)$, for $B \subset C$ and $\epsilon \ge 0$.
\item $P_B(\epsilon) = P_B(0)$ if $H_B$ has spectral gap greater than $\epsilon$.
\end{itemize}
\end{cor}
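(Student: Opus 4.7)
My plan is to dispose of the four bullets one at a time, since each follows almost directly from the definition of $P_B(\epsilon)$ as the spectral projector of $H_B$ onto $[0,\epsilon]$ together with frustration-freeness of $H_0$.

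For the first bullet, I would diagonalize $H_B=\sum_k\lambda_k\Pi_k$ (possible since $H_B$ is Hermitian and finite-dimensional). The defining property of $P_B(\epsilon)$ is $P_B(\epsilon)=\sum_{\lambda_k\le\epsilon}\Pi_k$, so $1-P_B(\epsilon)=\sum_{\lambda_k>\epsilon}\Pi_k$. On the image of $1-P_B(\epsilon)$ we have $H_B\ge\epsilon(1-P_B(\epsilon))$ term-by-term because $\lambda_k>\epsilon$ there, while on the image of $P_B(\epsilon)$ we have $H_B\ge 0$ by frustration-freeness (each $Q_v\ge 0$) and the right-hand side vanishes, so the inequality holds globally.

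For the second bullet, the nesting $P_B(\epsilon)P_B(\epsilon')=P_B(\epsilon')$ (when $\epsilon'\le\epsilon$) is immediate from the inclusion of spectral subspaces $\mathrm{range}\,P_B(\epsilon')\subset\mathrm{range}\,P_B(\epsilon)$. For the identity $P_\Lambda(\epsilon)=P_0$ when $\epsilon\le\gamma$, I use that $H_\Lambda=\sum_{b_v(1)\subset\Lambda}Q_v=H_0$ by periodicity and the definition of $H_B$; since by assumption $H_0$ has spectral gap $\gamma$ above its zero-energy groundspace, the only eigenstates with energy $\le\epsilon<\gamma$ are the groundstates, and the limiting case $\epsilon=\gamma$ is handled by the strict inequality in the definition ($\le\epsilon$) together with the fact that no eigenvalue lies strictly between $0$ and $\gamma$.

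The third bullet is the one that most visibly uses frustration-freeness, so it deserves the most care. Let $|\psi\rangle\in\mathrm{range}\,P_C(0)$, so $H_C|\psi\rangle=0$. Because $H_C=\sum_{b_v(1)\subset C}Q_v$ is a sum of positive semidefinite operators, each summand annihilates $|\psi\rangle$: $Q_v|\psi\rangle=0$ for every $v$ with $b_v(1)\subset C$. Since $B\subset C$, any $v$ with $b_v(1)\subset B$ also satisfies $b_v(1)\subset C$, so $H_B|\psi\rangle=\sum_{b_v(1)\subset B}Q_v|\psi\rangle=0$. Thus $P_B(0)|\psi\rangle=|\psi\rangle$, and since $P_B(0)P_B(\epsilon)=P_B(0)$ by the second bullet, $P_B(\epsilon)|\psi\rangle=|\psi\rangle$ as well; this gives $P_B(\epsilon)P_C(0)=P_C(0)$ upon ranging over $|\psi\rangle$.

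For the fourth bullet, if the spectral gap of $H_B$ exceeds $\epsilon$ then $\mathrm{spec}(H_B)\cap(0,\epsilon]=\emptyset$ (noting that $0\in\mathrm{spec}(H_B)$ by the frustration-freeness argument in bullet three applied with $C=\Lambda$). Consequently the spectral subspace of energies $\le\epsilon$ coincides with the kernel of $H_B$, giving $P_B(\epsilon)=P_B(0)$. I do not foresee a genuine obstacle here; the only mild subtlety is remembering to invoke frustration-freeness to guarantee that $H_B\ge 0$ and that $\ker H_B\ne\{0\}$ in bullets one, three, and four.
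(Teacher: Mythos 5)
Your proof is correct and follows essentially the same route as the paper: spectral decomposition of $H_B$ for bullets one, two, and four, and the frustration-free argument ($H_C|\psi\rangle=0$ with $Q_v\ge 0$ forces $Q_v|\psi\rangle=0$, hence $H_B|\psi\rangle=0$) for bullet three. One tiny terminological slip: in bullet one you attribute $Q_v\ge 0$ to frustration-freeness, but it is really a consequence of the $Q_v$ being projections; frustration-freeness is the separate statement that the global groundstate achieves zero energy for each term.
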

\begin{proof}
The first claim follows directly from the definition of $P_B(\epsilon)$, since $1-P_B(\epsilon)$ includes only states with energy at least $\epsilon$ in $H_B$. The second claim follows from the fact that $1-P_B(\epsilon)$ is orthogonal to $P_B(\epsilon')$ for $\epsilon'\le \epsilon$. Moreover, $P_\Lambda(\epsilon) = P_\Lambda(\epsilon') = P_0$, for all $\epsilon' \in [0,\epsilon]$, since $H_0$ has spectral gap $\gamma \ge \epsilon$. The third claim follows from the second claim and $P_B(0) P_C(0) = P_C(0)$, which is true for frustration-free Hamiltonians since $H_C P_C(0) = 0$ implies $H_B P_C(0) = 0$. In particular, we get: $P_B(\epsilon) P_C(0) = (P_B(\epsilon) P_B(0))\, P_C(0) = P_C(0)$. The fourth claim follows from the definition of $P_B(\epsilon)$ and the spectral decomposition of $H_B$.
\end{proof}

We are now ready to introduce the main assumption needed for the stability of frustration-free systems. It is a condition on the degree of topological order for local groundstates, so we shall call it {\it Local Topological Quantum Order [Local-TQO]}. An equivalent way of viewing Local-TQO is given in Corollary \ref{cor:local_same}, where it is shown that the condition corresponds to {\it local-indistinguishability} of the groundstates of local Hamiltonians. In other words, the Local-TQO condition formalizes the notion that different groundstates arise from the application of different boundary conditions on a common low-energy subspace.


\subsection{Topological Order Assumption.}
\begin{defn}{[}Local-TQO{]}\label{def:local_tqo}
Let $A = b_u(r)$ with $r\leq L^{*} < L$ and $O_{A}$ be any bounded operator supported on $A$. Set $A(\ell) = b_u(r+\ell)$ and define 
\[c_{\ell}(O_{A}) := \Tr(P_{A(\ell)} O_A)/\Tr P_{A(\ell)}, \quad c(O_{A}) := \Tr(P_0 O_A)/\Tr P_0.
\]
Then, for each fixed $1\le \ell \le L-r$, we have: 
\[
\left\Vert P_{A(\ell)} O_{A} P_{A(\ell)} -c_{\ell}\left(O_{A}\right) P_{A(\ell)}\right\Vert \leq\|O_{A}\|\,\Delta_{0}(\ell),
\]
where $\Delta_{0}(\ell)$ is a decaying function of $\ell$.
\end{defn}

The cut-off parameter for topological order, $L^*$, may be thought of as the {\it code distance} of the encoding one may define on the (degenerate) groundstate space of $H_0$. Since we are interested in the stability of the spectral gap of $H_0$, we expect that $L^*$ should scale with $L$. To see this, note that if the \emph{Local-TQO} condition fails after some fixed distance, then there must be operators $O_A$, with $A$ of increasing diameter and $\|O_A\|=1$, such that $\braket{\Psi_0^1}{O_A \Psi_0^1}-\braket{\Psi_0^2}{O_A \Psi_0^2} \ge c_0$, for some constant $c_0 > 0$ and $\{\ket{\Psi_0^1},\ket{\Psi_0^2}\}$ groundstates of $H_0$. If we assume that the number of these operators $O_A$, for a given size of their support $A$, scales with $L$, then it follows that perturbing $H_0$ by $\sum_A O_A$, would create a macroscopic energy gap within the groundstate subspace, thus destroying any notion of stability and macroscopic order. This is exactly what happens in the case of the two-dimensional Ising Hamiltonian $H_0 = -\sum_{p\sim q} \sigma^z_p \sigma^z_q$, under a uniform magnetic field $J \sum_p \sigma^z_p$. In particular, the two-fold degenerate groundstate space fails to satisfy \emph{Local-TQO} at the most basic level, since any local operator $\sigma^z_p, \, p \in \Z^2$ can distinguish the groundstates (one has eigenvalue $+1$ and the other $-1$.) 

One may wonder at this point why it is necessary to demand that low-energy states are locally indistinguishable at all scales, and not just at the global scale of the system. In other words, why do we care if the groundstates of the local Hamiltonians are also topologically ordered? The reason is best exemplified with the following counterexample: Let $H_0 = -\sigma^z_0 - \sum_{p\sim q} \sigma^z_p \sigma^z_q$, where $p,q \in \Z^2$. This is a gapped Hamiltonian with a unique groundstate $\ket{0\cdots 0}$, satisfying topological order at the global level (trivially, since it is the unique groundstate). It is easy to see that a perturbation of the form $J \sum_{p\in \Lambda} \sigma^z_p$, with $J>0$, will close the gap between the states $\ket{000\cdots 0}$ and $\ket{111\cdots 1}$, even for $J \sim 1/|\Lambda|$. Since \emph{Local-TQO} fails for any region that does not include the origin, this implies that some form of \emph{Local-TQO} is necessary to protect the spectral gap from collapsing. The above examples validate our intuition that Hamiltonians are {\it unstable} because local ``order parameters'' can act as perturbations to either \emph{open the gap} between groundstates, or \emph{close the gap} between global groundstates and excited states which have low energy locally. In that sense, the above condition implies that states that minimize the energy of the Hamiltonian as more and more interactions are added, differ only near the boundary of the region we are considering.

The \emph{Local-TQO} condition defined above, combines and generalizes assumptions $TQO$-$1,2$ of the stability result for the commuting case \cite{bravyi:2010a,bravyi:2010b}. It is worth noting that the above condition is a property of the local, low-energy subspaces. In fact, it is possible in many cases (e.g. Hamiltonians with a unique Matrix Product State as their groundstate, or injective PEPS~\cite{P-GVCW:2008}, \cite{schuch:2010}, \cite{schuch:2010a}) to modify a frustration-free Hamiltonian $H_0$ to a different Hamiltonian $H_1$, satisfying {\it Local-TQO} without changing the global ground state subspace, or closing the gap. In such cases, $H_0$ and $H_1$ are both gapped, frustration-free and share the same ground state space, so $H_s = (1-s)\, H_0 + s H_1$ is also gapped and frustration-free with the same ground state space. In other words, if $H_1$ satisfies {\it Local-TQO} and $H_0$ does not, we can apply the stability result to $H_1$ to infer properties about the groundstate of $H_0$. For a concrete example, one can look at two different {\it toric code} Hamiltonians, as discussed in \cite[Sec. 2.D]{bravyi:2010a}.

\begin{cor}\label{cor:local_tqo} Let $O_{A}$ be any bounded operator supported on the ball $A=b_u(r)$, with $r \le L^*$ and $u\in\Lambda$.
Then, \emph{Local-TQO} implies the following bounds for $1 \le \ell \le L-r$:
\begin{align}
\left|\left\Vert O_{A}P_{A(\ell)}\right\Vert -\left\Vert O_{A}P_0\right\Vert \right|&\leq 2\|O_{A}\|\,\sqrt{\Delta_{0}(\ell)}.\\
\|P_{A(\ell)}-P_{A(\ell)}P_A\| &\le \sqrt{3}\sqrt{\Delta_{0}(\ell)}.
\end{align}
 \end{cor}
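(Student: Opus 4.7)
Both inequalities reduce to applying Local-TQO to judiciously chosen positive operators, followed by elementary norm manipulations and the frustration-free identities of Corollary~\ref{cor:local_energy}.

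For the first inequality the plan is to work with squared norms. Setting $a:=\|O_A P_{A(\ell)}\|$ and $b:=\|O_A P_0\|$ and using the C$^*$-identity $\|X\|^2=\|X^\dagger X\|$ rewrites
\[
a^2=\|P_{A(\ell)}O_A^\dagger O_A P_{A(\ell)}\|,\qquad b^2=\|P_0 O_A^\dagger O_A P_0\|.
\]
Applying Local-TQO to the positive operator $O_A^\dagger O_A$, which is supported on $A$ and has operator norm $\|O_A\|^2$, yields
\[
\bigl\|P_{A(\ell)}O_A^\dagger O_A P_{A(\ell)}-c_\ell(O_A^\dagger O_A)P_{A(\ell)}\bigr\|\le\|O_A\|^2\Delta_0(\ell),
\]
and since $c_\ell(O_A^\dagger O_A)\ge 0$ this gives $|a^2-c_\ell(O_A^\dagger O_A)|\le\|O_A\|^2\Delta_0(\ell)$. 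Sandwiching the same operator identity between two copies of $P_0$ and invoking $P_{A(\ell)}P_0=P_0$ from Corollary~\ref{cor:local_energy} yields the identical estimate for $b^2$. A triangle inequality gives $|a^2-b^2|\le 2\|O_A\|^2\Delta_0(\ell)$, and the elementary bound $|a-b|^2\le|a^2-b^2|$ (for $a,b\ge 0$) concludes $|a-b|\le\sqrt 2\,\|O_A\|\sqrt{\Delta_0(\ell)}$, comfortably within the claimed $2\|O_A\|\sqrt{\Delta_0(\ell)}$.

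For the second inequality the plan is again to square and reduce to a sandwich. Using $\|X\|^2=\|X^\dagger X\|$ with $X=P_{A(\ell)}(\one-P_A)$ together with $(\one-P_A)^2=\one-P_A$,
\[
\|P_{A(\ell)}-P_{A(\ell)}P_A\|^2=\|P_{A(\ell)}(\one-P_A)P_{A(\ell)}\|.
\]
Applying Local-TQO to $O_A:=\one-P_A$, which is supported on $A$ and has norm at most $1$, bounds the right-hand side by $c_\ell(\one-P_A)+\Delta_0(\ell)$. The key collapse is that frustration-freeness together with $A\subset A(\ell)$ yields $P_{A(\ell)}P_A=P_{A(\ell)}$ via Corollary~\ref{cor:local_energy}, so $\Tr(P_{A(\ell)}(\one-P_A))=0$ and therefore $c_\ell(\one-P_A)=0$. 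This gives $\|P_{A(\ell)}-P_{A(\ell)}P_A\|^2\le\Delta_0(\ell)\le 3\Delta_0(\ell)$, well within $\sqrt 3\sqrt{\Delta_0(\ell)}$.

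There is no significant obstacle: the only conceptual choices are the two positive operators $O_A^\dagger O_A$ and $\one-P_A$ that get fed into Local-TQO, plus the observation that frustration-freeness collapses the scalar correction $c_\ell(\one-P_A)$ to zero. The only bookkeeping subtlety is the passage from squared- to unsquared-norm estimates in the first claim, dispatched by the elementary inequality $|a-b|^2\le|a^2-b^2|$.
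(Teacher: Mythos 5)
Your proof of the first inequality is correct and is in fact a slight simplification of the paper's: rather than introducing the intermediate scalar $c(|O_A|^2)=c_{L-r}(|O_A|^2)$ and chaining three estimates, you compare both $\|O_A P_{A(\ell)}\|^2$ and $\|O_A P_0\|^2$ directly to $c_\ell(|O_A|^2)$ by sandwiching the Local-TQO estimate between two copies of $P_0$ and using $P_0 P_{A(\ell)}=P_0$. This buys you the sharper constant $\sqrt{2}$ in place of the paper's $2$. The key sandwich step
\[
\|P_0\,|O_A|^2\,P_0 - c_\ell(|O_A|^2)P_0\| \le \|P_{A(\ell)}\,|O_A|^2\,P_{A(\ell)} - c_\ell(|O_A|^2)P_{A(\ell)}\|
\]
is the same identity the paper uses, just deployed more economically.

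There is, however, a genuine gap in your proof of the second inequality. You invoke $P_{A(\ell)}P_A = P_{A(\ell)}$ from Corollary~\ref{cor:local_energy} to conclude $c_\ell(\one-P_A)=0$. But note that $P_{A(\ell)}P_A = P_{A(\ell)}$ already forces the left-hand side $\|P_{A(\ell)} - P_{A(\ell)}P_A\|$ to vanish identically, so in the regime where your argument applies the inequality is $0\le\sqrt{3}\sqrt{\Delta_0(\ell)}$ and carries no content. The paper flags this explicitly in the remark following the corollary: the bound is meant to hold when $P_A$ and $P_{A(\ell)}$ stand for the $\epsilon$-level projections $P_A(\epsilon)$ and $P_{A(\ell)}(\epsilon)$ (assuming they still satisfy Local-TQO), and for $\epsilon>0$ one has $P_{A(\ell)}(\epsilon) P_A(\epsilon)\neq P_{A(\ell)}(\epsilon)$, so $c_\ell(\one-P_A)\neq 0$. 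Your identification $c_\ell(P_A)=1$ is precisely what fails. The paper instead gets a handle on $c_\ell(P_A)$ indirectly: the \emph{global} scalar $c(P_A)=c_{L-r}(P_A)=\Tr(P_A P_0)/\Tr P_0 = 1$ always holds because $P_A(\epsilon) P_0 = P_0$ (Corollary~\ref{cor:local_energy}, valid for every $\epsilon$), and then the trace comparison
\[
|c_{L-r}(P_A)-c_\ell(P_A)|\le \Delta_0(\ell)+\Delta_0(L-r)
\]
transfers this to $c_\ell(P_A)\ge 1-\Delta_0(\ell)-\Delta_0(L-r)$, producing the $3\Delta_0(\ell)$ bound on $\|P_{A(\ell)}(\one-P_A)P_{A(\ell)}\|$. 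That detour through the global scalar is what you are missing, and it is exactly why the constant is $\sqrt{3}$ rather than $1$.
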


Abusing notation, the projections $P_A$ and $P_{A(\ell)}$, above, correspond to $P_A(\epsilon)$ and $P_{A(\ell)}(\epsilon)$, respectively. All that is needed for the bounds to hold is that the projections satisfy the \emph{Local-TQO} condition. Of course, for $P_A = P_A(0)$ and $P_{A(\ell)}= P_{A(\ell)}(0)$, we have from Corollary~\ref{cor:local_energy} that $\|P_{A(\ell)}-P_{A(\ell)}P_A\| = 0$. The proof is given in Appendix~\ref{append:local_tqo}.

The next Corollary makes precise the notion of {\it local indistinguishability} between ground states and provides an equivalent condition to \emph{Local-TQO}.

\begin{cor}\label{cor:local_same}
Let $A=b_u(r)$, with $r \le L^*$ and $u\in\Lambda$. For any two groundstates $\ket{\psi_1(A(\ell))}, \ket{\psi_2(A(\ell))}$ of $H_{A(\ell)}$; that is, $P_{A(\ell)} \ket{\psi_i(A(\ell))} = \ket{\psi_i(A(\ell))}, \, i =1,2$, define the reduced density matrices $\rho_i(A) = \Tr_{A(\ell)\setminus A} \pure{\psi_i(A(\ell))}$. Then, \emph{Local-TQO} is equivalent to:
\begin{equation}\label{local_tqo:2}
\|\rho_1(A) - \rho_2(A)\|_1 \le 2\Delta_{0}(\ell).
\end{equation}
\end{cor}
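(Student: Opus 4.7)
The plan is to prove both implications directly, exploiting the duality between the operator norm and the trace norm together with the observation that the reduction of the maximally mixed groundstate of $H_{A(\ell)}$ equals the uniform average of pure-groundstate reductions. Throughout I will use $d := \Tr P_{A(\ell)}$ and an arbitrary orthonormal basis $\{|i\rangle\}_{i=1}^d$ of the range of $P_{A(\ell)}$, so that
\[
c_\ell(O_A) \;=\; \frac{1}{d}\sum_{i=1}^d \langle i|O_A|i\rangle \;=\; \frac{1}{d}\sum_{i=1}^d \Tr(\rho_i(A)\,O_A).
\]

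For the forward direction (\emph{Local-TQO} $\Rightarrow$ trace-norm bound), the key point is that $|\psi_i(A(\ell))\rangle$ lies in the range of $P_{A(\ell)}$, so $\langle\psi_i(A(\ell))|O_A|\psi_i(A(\ell))\rangle = \langle\psi_i(A(\ell))|P_{A(\ell)}O_AP_{A(\ell)}|\psi_i(A(\ell))\rangle$ and hence the \emph{Local-TQO} hypothesis yields $|\Tr(\rho_i(A)O_A) - c_\ell(O_A)| \le \|O_A\|\,\Delta_0(\ell)$ for $i=1,2$. Subtracting the two bounds gives $|\Tr\bigl((\rho_1(A)-\rho_2(A))O_A\bigr)| \le 2\|O_A\|\,\Delta_0(\ell)$, and taking the supremum over bounded $O_A$ with $\|O_A\|\le 1$, via the variational formula $\|X\|_1=\sup_{\|O\|\le 1}|\Tr(XO)|$, produces exactly the inequality (\ref{local_tqo:2}).

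For the converse, fix $O_A$ and a unit groundstate $|\psi\rangle$ of $H_{A(\ell)}$ with reduction $\rho_\psi(A)$. Using the identity for $c_\ell(O_A)$ above and the triangle inequality,
\[
\bigl|\langle\psi|O_A|\psi\rangle - c_\ell(O_A)\bigr|
\;=\;\Bigl|\Tr\Bigl(\bigl(\rho_\psi(A)-\tfrac{1}{d}\sum_i\rho_i(A)\bigr)O_A\Bigr)\Bigr|
\;\le\; \frac{\|O_A\|}{d}\sum_{i=1}^d \|\rho_\psi(A)-\rho_i(A)\|_1
\;\le\; 2\|O_A\|\,\Delta_0(\ell),
\]
where in the last step each $|\psi\rangle$ and $|i\rangle$ is a groundstate of $H_{A(\ell)}$, so the hypothesis applies. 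When $O_A$ is Hermitian, the operator $P_{A(\ell)}(O_A-c_\ell(O_A) I)P_{A(\ell)}$ is Hermitian on the range of $P_{A(\ell)}$ and vanishes on its complement, so its operator norm equals the maximum of $|\langle\psi|O_A|\psi\rangle-c_\ell(O_A)|$ over unit groundstates $|\psi\rangle$, yielding the \emph{Local-TQO} inequality (up to the harmless constant $2$). The general, non-Hermitian case is handled by the Cartesian decomposition $O_A = H + iK$ with $H,K$ Hermitian.

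The only real subtlety — and what I would treat as the main obstacle — is the mismatch of constants between the two directions: the forward direction is tight with the factor $2$, while the backward direction passes through a convex combination of $d$ pure groundstate reductions and so \emph{a priori} inflates the decay function by a constant. Since the subsequent use of \emph{Local-TQO} throughout the paper only depends on the existence of a rapidly decaying $\Delta_0$, replacing $\Delta_0$ by $2\Delta_0$ (or $4\Delta_0$ for general $O_A$) is immaterial, and the two conditions are equivalent in this qualitative sense.
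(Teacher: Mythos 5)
Your proof is correct and takes essentially the same route as the paper: the forward direction uses the duality between operator and trace norms together with the observation that $\ket{\psi_i(A(\ell))}$ lies in the range of $P_{A(\ell)}$, and the backward direction averages the trace-norm hypothesis over an orthonormal groundstate basis of $P_{A(\ell)}$. The one place where you go beyond the paper is in spelling out that the backward direction directly controls only the numerical radius of $P_{A(\ell)}(O_A - c_\ell(O_A))P_{A(\ell)}$, which equals the operator norm only for Hermitian $O_A$, and that a Cartesian decomposition recovers the general case at the cost of an additional constant factor; the paper absorbs this (and the inflation from the averaging step) into the phrase ``up to constant pre-factors,'' whereas you make the bookkeeping explicit.
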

The proof of the corollary is given in Appendix~\ref{append:local_same}.

Moreover, {\it Local-TQO} implies an area law \cite{hastings:2007} for the entanglement entropy of the ground states of $H_0$.
\begin{cor}\label{cor:area_law}
Let $A=b_u(r)$, with $r \le L^*$ and $u\in\Lambda$. Any groundstate $\ket{\Psi_0}$ of $H_0$ satisfying \emph{Local-TQO}, also satisfies an area law for the entanglement entropy of $\rho_A := \Tr_{A^c} \pure{\Psi_0}$:
\begin{equation}\label{eq:area_law}
S(\rho_A) \le (c_d \ln D) \, (1+r)^{d-1} \cdot \ell_0,
\end{equation}
where $c_d$ is a constant depending only on the dimension $d$ of the lattice, $D$ is the maximum dimension of the on-site Hilbert spaces (e.g. $D=2$ for spin-$1/2$ particles) and $\ell_0 = \min \{\ell: \Delta_0(\ell) \le \ell/(1+r) \}$.
\end{cor}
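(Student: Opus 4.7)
The plan is to use frustration-freeness and Local-TQO to approximate $\rho_A$ by a reduced density matrix of a pure state supported entirely on $A(\ell)$, and then bound the entropy via a Schmidt decomposition across the annulus $A(\ell)\setminus A$. First, since $\ket{\Psi_0}$ is a groundstate of $H_0$ and frustration-freeness gives $H_{A(\ell)}\ket{\Psi_0}=0$, the reduced state $\rho_{A(\ell)}:=\Tr_{A(\ell)^c}\pure{\Psi_0}$ is supported on the groundstate subspace of $H_{A(\ell)}$. Writing $\rho_{A(\ell)}=\sum_i p_i \pure{\psi_i}$ as a convex combination of groundstates of $H_{A(\ell)}$, tracing out $A(\ell)\setminus A$, and applying Corollary~\ref{cor:local_same} term-by-term to compare against any fixed groundstate $\ket{\psi}$ of $H_{A(\ell)}$ yields
\[
\|\rho_A-\tilde\rho_A\|_1 \le 2\Delta_0(\ell), \qquad \tilde\rho_A := \Tr_{A(\ell)\setminus A}\pure{\psi}.
\]

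Next, because $\tilde\rho_A$ is the reduction of the pure state $\ket{\psi}$ living on $A(\ell)=A\cup(A(\ell)\setminus A)$, its Schmidt rank across the $A$-cut is at most $\dim \H_{A(\ell)\setminus A}$, so
\[
S(\tilde\rho_A)\le |A(\ell)\setminus A|\,\ln D \le c'_d\,(1+r)^{d-1}\,\ell\,\ln D,
\]
since $A(\ell)\setminus A$ is a shell of thickness $\ell$ around the ball $A=b_u(r)$. Applying Fannes' inequality (with $\dim\H_A\le D^{|A|}$) to the trace-norm estimate from the previous step then gives
\[
S(\rho_A)\le S(\tilde\rho_A)+\Delta_0(\ell)\,|A|\,\ln D + h(\Delta_0(\ell)),
\]
where $|A|\le c''_d(1+r)^d$ and $h$ denotes the binary entropy.

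Finally, I would set $\ell=\ell_0=\min\{\ell:\Delta_0(\ell)\le \ell/(1+r)\}$ as in the statement, which forces $\Delta_0(\ell_0)\,|A|\le c''_d(1+r)^{d-1}\,\ell_0$, exactly matching the Schmidt bound in order of magnitude. Absorbing the bounded term $h(\Delta_0(\ell_0))\le \ln 2$ into the overall constant then yields $S(\rho_A)\le (c_d\ln D)(1+r)^{d-1}\,\ell_0$ as claimed. The only real subtlety is this balancing step: $\ell$ must be chosen large enough that Local-TQO makes $\Delta_0(\ell)$ small, yet small enough that the boundary shell $A(\ell)\setminus A$ does not itself dominate the volume of $A$. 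The definition of $\ell_0$ is precisely the break-even point between these two competing requirements, and it is the only place where the specific decay profile of $\Delta_0$ enters the final bound.
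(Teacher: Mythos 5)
Your proposal is correct and follows essentially the same route as the paper: the eigendecomposition of $\rho_{A(\ell)}$ into groundstates of $H_{A(\ell)}$ is exactly the Schmidt decomposition of $\ket{\Psi_0}$ across the cut $A(\ell):\Lambda\setminus A(\ell)$ used in the paper. The remaining steps — comparing $\rho_A$ to the reduction of a single groundstate via Corollary~\ref{cor:local_same}, bounding the reference entropy by the shell volume, and applying Fannes--Audenaert with the balancing choice $\ell=\ell_0$ — match the paper's argument.
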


\begin{proof}
We begin by writing the Schmidt decomposition of $\ket{\Psi_0}$ over the partition of the lattice $A(\ell):\Lambda\setminus A(\ell)$, where $\ell$ will be chosen later:
\be\label{eq:schmidt_dec}
\ket{\Psi_0} = \sum_{k \ge 0} \sqrt{\sigma_k} \ket{\Psi_k(A(\ell))} \otimes \ket{\Psi_k(\Lambda\setminus A(\ell))}, \quad \braket{\Psi_k(\Lambda\setminus A(\ell))}{\Psi_l(\Lambda\setminus A(\ell))} = \delta_{k,l}, \quad \sum_{k\ge 0} \sigma_k =1, \quad \sigma_k > 0.
\ee
Note that for frustration-free Hamiltonians, the states $\{\ket{\Psi_k(A(\ell))}\}_{k\ge 0}$ in the Schmidt decomposition of $\ket{\Psi_0}$ are groundstates of the Hamiltonian $H_{A(\ell)}$. To see this, recall that $H_0 = \sum_u Q_u, \, Q_u \ge 0$ and $H_0 \ket{\Psi_0} = 0$, which implies that $\bra{\Psi_0} H_{A(\ell)} \ket{\Psi_0} = 0 = \sum_{k\ge 0} \sigma_k \bra{\Psi_k(A(\ell))} H_{A(\ell)} \ket{\Psi_k(A(\ell))}$, hence $H_{A(\ell)} \ket{\Psi_k(A(\ell))} = 0, \, k\ge 0$. Setting $\rho_k(A) = \Tr_{A(\ell)\setminus A} \pure{\Psi_k(A(\ell))}$, we get from (\ref{eq:schmidt_dec}):
\be\label{eq:rho_A}
\rho_A := \Tr_{\Lambda\setminus A} \pure{\Psi_0} = \sum_{k\ge0} \sigma_k \, \rho_k(A) = \rho_0(A) + \sum_{k\ge0} \sigma_k \, (\rho_k(A)-\rho_0(A)).
\ee
Corollary~\ref{cor:local_same} and (\ref{eq:rho_A}) give: $\|\rho_A - \rho_0(A)\|_1 \le 2 \Delta_0(\ell)$. Combined with the bound $S(\rho_o(A)) \le (|A(\ell)|-|A|) \, \ln D \le c'_d\, (1+r)^{d-1} \ell \, \ln D$, for $\ell \le r$ and $c'_d$ a constant depending only on $d$, we have:
\be
S(\rho_A) \le S(\rho_0(A)) + |S(\rho_A) - S(\rho_0(A))| \le c'_d\, \ln D\, (1+r)^{d-1} \ell + c''_d \, \ln D \,\Delta_0(\ell) \, (1+r)^d + H(\Delta_0(\ell)),
\ee
where we used the Fannes-Audenaert inequality \cite{fannes,audenaert} for the entropy: $$|S(\rho_A) - S(\rho_0(A))| \le (\|\rho_A-\rho_0(A)\|_1/2) \, \ln D^{|A|} + H(\|\rho_A-\rho_0(A)\|_1/2),$$ with $H(p) = -p \ln p - (1-p) \ln (1-p), \, p \in [0,1]$. Setting $\ell_0 = \min \{\ell: \Delta_0(\ell) \le \ell/(1+r) \}$ and $c_d = \max\{c'_d,c''_d\} +1$, we get the desired bound, noting that $H(\Delta_0(\ell_0)) \le \ln 2$.
\end{proof}
We note that for \textit{stabilizer} Hamiltonians, like the toric code, the \emph{Local-TQO} decay $\Delta_0(\ell)$ vanishes after a constant length $\ell_0$; in other words, $\Delta_0(\ell) = 0, \, \ell \ge \ell_0$. This implies that for such Hamiltonians, the groundstates satisfy an area law exactly. On the other hand, if the decay $\Delta_0(\ell)$ is exponential in $\ell$, then there is a logarithmic divergence ($\ell_0 \sim \ln \,(1+r)$) in the area law, which is natural for gapless systems (the above proof made no use of a spectral gap.)

\subsection{Local Gap Assumption}
The second assumption imposes a condition on the spectral gaps of the local Hamiltonians $H_{b_u(r)}$.
\begin{defn}{[}Local-Gap{]}\label{def:local_gap} 
We say that $H_0$ is locally gapped with gap $\gamma(r) > 0$, iff for each $u \in \Lambda$ and $r \ge 0$, $P_{b_u(r)}(\gamma(r)) = P_{b_u(r)}$. When $\gamma(r)$ decays at most polynomially in $r$, we say that $H_0$ satisfies the \emph{Local-Gap} condition.
\end{defn}
Below, we present an example of a gapped, translationally-invariant, frustration-free Hamiltonian with \emph{Local-TQO}, that is unstable under vanishingly small perturbations:
\begin{example}
Let $H_{2N} = \sum_{k=1}^{2N} H_{k,k+1}$, where $H_{k,k+1} = 1/(3N) \pure{0}_k\otimes\pure{1}_{k+1} + \pure{1}_k\otimes\pure{1}_{k+1}+ \pure{0}_k\otimes\pure{0}_{k+1},$ for $k$ even and $H_{k,k+1} = 1/(3N) \pure{1}_k\otimes\pure{0}_{k+1} + \pure{1}_k\otimes\pure{1}_{k+1}+ \pure{0}_k\otimes\pure{0}_{k+1},$ for $k$ odd. There is a unique groundstate $\ket{0101\ldots 01},$ with energy $0$ and spectral gap $2/3$ to the first excited state $\ket{1010\ldots 10}$. Note that since all eigenstates of $H_{2N}$ can be written as product states over $\ket{0},\ket{1}$, any state with consecutive $0$'s, or $1$'s has energy at least $1$. It should be obvious that this Hamiltonian satisfies frustration-freeness and \emph{Local-TQO} (trivially, since the groundstate is unique and is a product state), but is unstable under the perturbation $V = -\sum_{k=1}^{2N} V_k$, where $V_k = 1/(3N) \pure{0}_k\otimes\pure{1}_{k+1}$, for $k$ even and $V_k = 1/(3N) \pure{1}_k\otimes\pure{0}_{k+1}$, for $k$ odd. Interestingly, the Hamiltonian $H_{2N}+V$ is frustration-free, has degenerate ground state space and satisfies the \emph{Local-Gap} condition, but not the \emph{Local-TQO} assumption, since there exist local operators, like $\pure{0}\otimes\pure{1}$, that distinguish the two ground states $\{\ket{010101\ldots 01}, \ket{101010\ldots 10}\}$. In other words, the two assumptions for stability seem to be independent and dual to each other. Moreover, this example suggests that, in some further refined form, both conditions are necessary for stability.
\end{example}
Such violations to the \emph{Local-Gap} condition are not obvious for Hamiltonians that are sums of non-vanishing terms. In fact, to the authors' knowledge, it remains an open problem whether it is possible to construct examples of gapped, frustration-free Hamiltonians that are sums of local projections and do not satisfy the {\it Local-Gap} condition. Moreover, we expect that the \emph{Local-Gap} condition can be merged with a stronger version of the \emph{Local-TQO} condition, where the projections $P_{A(\ell)}$ are substituted with $P_{A(\ell)}(\epsilon)$, for some $\epsilon < \gamma$. Furthermore, we expect that using the powerful \textit{detectability lemma} \cite{aharonov:2010}, one may be able to remove the \emph{Local-Gap} assumption, when the Hamiltonian interactions can be rescaled to projections in a system-size independent way.

\section{Transforming the perturbed Hamiltonian}\label{sec:perturbation}
The main goal of this section is to transform a $(J,f_0)$-perturbation $V$, into a $(J,w)$-perturbation $W$ with similar decay, such that: $W_u(r) P_{b_u(r)} = 0$ and for some unitary $U$ we have, $U^{\dagger} (H_0 + V) U = H_0 + W + \Delta + E \one,$ where $\|\Delta\| \rightarrow 0$ as $L \rightarrow \infty$ and $E = \Tr(U P_0 U^{\dagger} (H_0 + V))/\Tr P_0$. Recall that $P_{b_u(r)}$ is the ground state projector of the local, frustration-free Hamiltonian $H_{b_u(r)}$.

\subsection{Making the global ground states commute quasi-locally.}
In this section, we re-derive a version of Lemma $7$ in ~\cite{bravyi:2010b}, which will be sufficient for our purposes. The content of this Lemma can be summarized as the decomposition of any gapped Hamiltonian into a sum of quasi-local interactions that commute with the groundstate subspace. Note that any local, frustration-free Hamiltonian satisfies this condition (with strict locality), but once we add a perturbation, the new terms may no longer commute with the groundstate subspace.
\begin{lem}\label{lemma:global}
Let $H_s$ be the family of gapped Hamiltonians given in Definition~\ref{defn:path}. Then, there exists a linear operator $\F$, with the following properties:
\begin{enumerate}[(i)]
\item $\F(H_s) = H_s$ and, hence, $\Fz(H_0) = H_0$, 
\item $[\F(O_u(r)), P_0(s)]=0$, for any operator $O_u(r)$ supported on $b_u(r)$,  $u \in \Lambda$,
\item $\|\F(O_u(r))\| \le \|O_u(r)\|$,
\item $\F(O_u(r)) = \sum_{r' \ge 0} \F(r'; O_u(r))$, with $\F(r';O_u(r))$ supported on $b_u(r+r')$ and $\|\F(r'; O_u(r))\| \le \|O_u(r)\| \, f_0^{(1)}(r')$, for a rapidly decaying function $f_0^{(1)}$.
\item $\F(Q_u) - \Fz(Q_u) = \sum_{r \ge 2} X^{(1)}_u(r)$, with $X^{(1)}_u(r)$ supported on $b_u(r)$ and $\|X^{(1)}_u(r)\| \le sJ\, \|Q_u\| \,f^{(1)}_1(r)$, for a rapidly decaying function $f_1^{(1)}$.
\end{enumerate}
\end{lem}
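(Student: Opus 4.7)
The approach is to realize $\F$ as a spectral filter applied via Heisenberg evolution under $H_s$, namely
\[
\F(O) := \int_{-\infty}^{\infty} dt\, W_{\gamma'}(t)\, e^{iH_s t}\, O\, e^{-iH_s t},
\]
where $W_{\gamma'}\ge 0$ is a standard quasi-adiabatic filter with $\int W_{\gamma'}(t)\,dt = 1$, Fourier transform $\tilde{W}_{\gamma'}$ supported in $[-\gamma',\gamma']$, and tails on the real axis decaying faster than any polynomial in $|t|$. Such a filter can be imported from \cite{hastings:2010,bravyi:2010b,BMNS:2011}; as a concrete prototype, one may set $\tilde{W}_{\gamma'} = \bar{g} \ast g$ for some $g \in C_c^\infty([-\gamma'/2,\gamma'/2])$ with $\|g\|_2 = 1$. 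Linearity of $\F$ is immediate.

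Properties (i)--(iii) are then essentially routine. Item (i) follows from $e^{iH_s t}H_s e^{-iH_s t} = H_s$ together with $\int W_{\gamma'} = 1$. Item (iii) uses positivity of $W_{\gamma'}$ and the triangle inequality to give $\|\F(O)\|\le\|O\|$. For item (ii), expand $O_u(r)$ in the eigenbasis of $H_s$: each matrix element $|n\rangle\langle m|$ that connects the groundspace (eigenvalue $0$) to an eigenstate of energy $E_m \ge \gamma_s \ge \gamma'$ evolves as $e^{i(E_n-E_m)t}|n\rangle\langle m|$, so integrating against $W_{\gamma'}$ multiplies it by $\tilde{W}_{\gamma'}(\pm E_m) = 0$. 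Both off-diagonal blocks of $\F(O_u(r))$ with respect to $P_0(s)$ therefore vanish, yielding $[\F(O_u(r)),P_0(s)] = 0$.

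For the quasi-locality bound (iv), the plan is to combine a Lieb-Robinson estimate for the evolution generated by $H_s = H_0 + sV$ with the rapid decay of $W_{\gamma'}(t)$. Since $V$ is a $(J,f_0)$-perturbation with sufficiently fast decay, $H_s$ satisfies a Lieb-Robinson bound with some finite velocity $v$, so that $e^{iH_s t}O_u(r)e^{-iH_s t}$ differs in operator norm by at most $C\|O_u(r)\|e^{v|t|-\mu r'}$ from its best approximation supported on $b_u(r+r')$. Let $\mathcal{E}_{r'}$ denote the normalized partial trace (conditional expectation) onto $b_u(r+r')$ and set
\[
\F(r';O_u(r)) := \mathcal{E}_{r'}\F(O_u(r)) - \mathcal{E}_{r'-1}\F(O_u(r)),
\]
with $\mathcal{E}_{-1}:=0$. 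Splitting the $t$-integral at, say, $|t|\le r'/(2v)$ versus $|t|>r'/(2v)$ balances the Lieb-Robinson exponential against the filter's rapidly decaying tail and produces a rapidly decaying envelope $f_0^{(1)}(r')$.

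Finally, (v) follows from a Duhamel comparison: writing
\[
\F(Q_u) - \Fz(Q_u) = \int dt\, W_{\gamma'}(t)\,\bigl[e^{iH_s t}Q_u e^{-iH_s t} - e^{iH_0 t}Q_u e^{-iH_0 t}\bigr]
\]
and applying $e^{iH_s t} - e^{iH_0 t} = is\int_0^t d\tau\, e^{iH_s(t-\tau)} V\, e^{iH_0\tau}$ extracts the prefactor $sJ$ once $V$ is expanded as $\sum_v\sum_{\rho\ge 0} V_v(\rho)$ with $\|V_v(\rho)\|\le J f_0(\rho)$. Each contribution is an integrand whose spatial support, after Lieb-Robinson spreading from both $Q_u$ and $V_v(\rho)$, is essentially contained in some ball $b_u(r)$ around $u$; telescoping the localized approximations as in step (iv) gives the decomposition $\sum_{r\ge 2}X_u^{(1)}(r)$ with $\|X_u^{(1)}(r)\|\le sJ\|Q_u\|f_1^{(1)}(r)$. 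The main technical obstacle here is to manage three decays simultaneously (Lieb-Robinson spreading from $Q_u$ and from each $V_v(\rho)$, the perturbation tail $f_0$, and the filter weight $W_{\gamma'}$) while also controlling the $|t|\to\infty$ growth of the naive Duhamel integrand; optimizing the Lieb-Robinson velocity cutoff against the filter tail, in the spirit of the commuting-case argument of \cite{bravyi:2010b}, is what ultimately delivers a rapidly decaying $f_1^{(1)}$.
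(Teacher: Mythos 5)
Your overall strategy — filter-average the Heisenberg evolution by $w_{\gamma'}$ — is exactly the paper's, and your arguments for (i)--(iii) are the paper's arguments.

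For (iv) you telescope with conditional expectations $\mathcal{E}_{r'}$ (normalized partial traces) onto growing balls, while the paper telescopes the unitary evolutions themselves, replacing $\tau_t^{H_s}$ by $\tau_t^{H^u_s(r+r')}$ generated by Hamiltonians truncated to $b_u(r+r')$. Both are standard ways to extract quasi-locality from a Lieb--Robinson bound, so this is a legitimate, equally effective alternative: the conditional-expectation route buys you a cleaner definition of $\F(r';O_u(r))$, whereas the paper's truncated-evolution route is the one that later re-appears verbatim in the estimate for $\G$ in Lemma~\ref{lemma:local}, so the paper reuses machinery.

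There is, however, a genuine gap in your treatment of (v). The propagator-level Duhamel identity you wrote,
\[
e^{iH_s t} - e^{iH_0 t} = is\int_0^t d\tau\, e^{iH_s(t-\tau)}\,V\, e^{iH_0\tau},
\]
is correct but does not produce a commutator with $Q_u$ once you form $\tau_t^{H_s}(Q_u)-\tau_t^{H_0}(Q_u)$ and expand $V=\sum_v\sum_\rho V_v(\rho)$. The individual term attached to $V_v(\rho)$ then has norm of order $\|V_v(\rho)\|\,\|Q_u\|$ \emph{independently} of the distance $d(u,v)$, so the sum over $v\in\Lambda$ diverges like $L^d$ and your ``after Lieb--Robinson spreading, supported in a ball around $u$'' claim fails — there is nothing small to spread. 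The cancellation that makes the $v$-sum summable is visible only after you reorganize into commutator form, e.g.
\[
\tau_t^{H_s}(Q_u)-\tau_t^{H_0}(Q_u)= is\int_0^t \tau_\tau^{H_s}\!\left(\bigl[V,\,\tau_{t-\tau}^{H_0}(Q_u)\bigr]\right)d\tau,
\]
so that each term carries $\bigl[V_v(\rho),\,\tau_{t-\tau}^{H_0}(Q_u)\bigr]$, whose norm decays in $d(u,v)$ by Lieb--Robinson, giving the summability you need. The paper gets this automatically by differentiating in $s$ and then integrating $\int_0^s ds'$, which produces exactly a $\bigl[\int_0^t\tau_{t'}^{H_{s'}}(V)dt',\,\tau_t^{H_{s'}}(Q_u)\bigr]$ structure. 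Your sketch flags the $|t|\to\infty$ growth as the obstacle, but the $L^d$ divergence from the sum over $v$ is the more fundamental one, and it is not addressed; once you switch to the commutator form, the rest of your plan (managing the $f_0$ tail, the Lieb--Robinson cone, and the filter tail of $w_{\gamma'}$) goes through.
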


\begin{proof}
Define the operator $\F$ as follows: 
\be\label{F:operator}
\F(O_u(r)) = \int_{-\infty}^{\infty} dt\, w_{\gamma'}(t) \, \tau_t^{H_s}(O_u(r)),
\ee
where $\tau_t^{B}(A) = e^{itB} A e^{-itB}$ and $w_{\gamma'}(t)$ is the filter function studied in~\cite{BMNS:2011,hastings:2010}, with the following properties:
\begin{enumerate}
\item Compact Fourier Transform \cite{I1934}: $\hat{w}_{\gamma'}(\omega) = 0,$ for $|\omega| \ge \gamma'$ and $\hat{w}_{\gamma'}(0) = 1$.
\item Rapid decay: $0 \le w_{\gamma'}(t) \le 2e^2\gamma' (\gamma' |t|) \exp\left(-\frac{2}{7} \frac{\gamma' |t|}{\ln^2 (\gamma' |t|)}\right),$ for $\gamma' |t| \ge e^{1/\sqrt{2}}$ and $0 \le w_{\gamma'}(t) \le \gamma'/\pi$, for all $t \in \R$.
\end{enumerate}
Note that, we immediately get:
\be
\F(H_s) = \int_{-\infty}^{\infty} dt\, w_{\gamma'}(t) \, e^{itH_s} (H_s) e^{-itH_s} = \left(\int_{-\infty}^{\infty} dt\, w_{\gamma'}(t)\right) H_s = H_s,
\ee
since $\int_{-\infty}^{\infty} dt\, w_{\gamma'}(t) = \hat{w}_{\gamma'}(0) = 1.$
Moreover, the new interactions commute with $P_0(s)$, since for any ground state $\ket{\psi_0(s)}$ of $H_s$ and any eigenstate $\ket{\psi_n(s)}$ orthogonal to $P_0(s)$ with eigenvalue $E_n(s)$, we have:
\be
\bra{\psi_n(s)} \F(O_u(r)) \ket{\psi_0(s)} = \hat{w}_{\gamma'}(E_{n}(s)-E_0(s)) \bra{\psi_n(s)} O_u(r) \ket{\psi_0(s)} = 0,
\ee
where $E_n(s)-E_0(s) \ge \gamma'$, by assumption. This implies that: $$(\one-P_0(s)) \F(O_u(r)) P_0(s) = 0 = P_0(s) \F(O_u(r)) (1-P_0(s)),$$ hence $\F(O_u(r)) P_0(s) = P_0(s) \F(O_u(r)) P_0(s) = P_0(s) \F(O_u(r)),$ as desired. 
We also have by the triangle inequality:
\be
\|\F(O_u(r))\| \le \int_{-\infty}^{\infty} dt\, w_{\gamma'}(t) \, \|e^{itH_s} (O_u(r)) e^{-itH_s}\| = \|O_u(r)\|.
\ee
Finally, the quasi-locality of $\F(O_u(r))$ follows from writing $\F(O_u(r)) = \sum_{r' \ge 0} \F(r';O_u(r)),$ where for $r' \ge 1$:
\be
\F(r';O_u(r))= \int_{-\infty}^{\infty} dt\, w_{\gamma'}(t) \, \left(\tau_t^{H^u_s(r+r')}\big(O_u(r)\big)-\tau_t^{H^u_s(r+r'-1)}\big(O_u(r)\big)\right),
\ee
and
\be
\F(0;O_u(r))= \int_{-\infty}^{\infty} dt\, w_{\gamma'}(t) \, \tau_t^{H^u_s(r)}\big(O_u(r)\big),
\ee
with the Hamiltonians in the exponent defined as follows, for $0 \le q \le L$: 
\be
H^u_s(q) = \sum_{v: b_v(1) \subset b_u(q)} Q_v + \sum_{(k,v): b_v(k) \subset b_u(q)} s\cdot V_v(k).
\ee
Note that $H^u_s(L) = H_s$ and that $\F(r';O_u(r))$ has support on $b_u(r+r')$ and norm decaying rapidly in $r'$. The decay follows directly from the assumption on the almost-exponential decay of $w_{\gamma'}(t)$ for $|t|$ large enough and, rapidly decaying Lieb-Robinson bounds in $r'$, on the norm of the commutator:
$$\big[H^u_s(r+r')-H^u_s(r+r'-1), \tau_t^{H^u_s(r+r')}\big(O_u(r)\big)\big],$$ for small enough $|t|$. 
In particular, note that for $$\delta_{s,t}(r';O_u(r)) =  \Big\|\tau_t^{H^u_s(r+r')}\big(O_u(r)\big)-\tau_t^{H^u_s(r+r'-1)}\big(O_u(r)\big)\Big\|,$$ we have:
\begin{align}
&\delta_{s,t}(r';O_u(r)) =  \left\|\tau_{-t}^{H^u_s(r+r'-1)}\big(\tau_t^{H^u_s(r+r')}\big(O_u(r)\big)\big)-O_u(r)\right\|\\
&\le \int_0^{|t|} \big\|\big[H^u_s(r+r')-H^u_s(r+r'-1), \tau_{t'}^{H^u_s(r+r')}\big(O_u(r)\big)\big]\big\| \, dt'.
\end{align}
Defining $I_q(u) = \{Z\subset b_u(q): Z \cap (b_{u}(q)\setminus b_{u}(q-1)) \neq \emptyset\}$, we continue the above estimate as follows:
\begin{align}
\delta_{s,t}(r';O_u(r)) &\le \sum_{b_v(1) \in I_{r+r'}(u)} \int_0^{|t|} \big\|\big[Q_v, \tau_{t'}^{H^u_s(r+r')}\big(O_u(r)\big)\big]\big\| \, dt'\\
&+ \sum_{b_v(k) \in I_{r+r'}(u)} \int_0^{|t|} s\, \big\|\big[\,V_v(k), \tau_{t'}^{H^u_s(r+r')}\big(O_u(r)\big)\big]\big\| \, dt'.
\end{align}
Note that the norm of the commutator $\big\|\big[Q_v, \tau_{t'}^{H^u_s(r+r')}\big(O_u(r)\big)\big]\big\|$ can be bounded directly using Lieb-Robinson bounds for the evolution $\tau_{t'}^{H^u_s(r+r')}$, with rapid decay in the distance between the supports of $b_v(1) \in I_{r+r'}(u)$ and $b_u(r)$, which is at least $r'-4$. Similarly, $\big\|\big[\,V_v(k), \tau_{t'}^{H^u_s(r+r')}\big(O_u(r)\big)\big]\big\|$ will decay rapidly in the distance between $b_v(k) \in I_{r+r'}(u)$ and $b_u(r)$, which is at least $r'-2k$. Of course, for $k \ge r'/2$ we may use the simple bound $\big\|\big[\,V_v(k), \tau_{t'}^{H^u_s(r+r')}\big(O_u(r)\big)\big]\big\| \le 2\|V_v(k)\|\|O_u(r)\| \le 2\,f_0(k)\, \|O_u(r)\|$.
On the other hand, we also have the trivial bound for all $t \in \R$:
\be
\delta_{s,t}(r';O_u(r)) \le \big\|\tau_t^{H^u_s(r+r')}\big(O_u(r)\big)\big\| + \big\|\tau_t^{H^u_s(r+r'-1)}\big(O_u(r)\big)\big\|= 2 \|O_u(r)\|.
\ee
Combining the above bounds with the usual technique of bounding the integral: $$\int_{-\infty}^{\infty} f(t)\cdot g(t) \,dt \le \|g\| \int_{-T}^{T} |f(t)|\, dt + \|f\| \int_{|t| \ge T} |g(t)| \,dt,$$ where $\|f\|$ is the supremum of $f$ over the appropriate domain, gives us the fastest decay for the optimal choice of the cut-off time $T$. In our case, $f(t) = \delta_{s,t}(r';O_u(r))$ and $g(t) = w_{\gamma'}(t)$.

Now, for the final statement of the Lemma, we note that Duhamel's differentiation formula implies:
\begin{align}
&\F(Q_u)-\Fz(Q_u) = \int_{-\infty}^{\infty} dt\, w_{\gamma'}(t) \, \int_0^s \partial_{s'} \tau_t^{H_{s'}}(Q_u) \,ds' = \\
&\int_{-\infty}^{\infty} dt\, w_{\gamma'}(t) \, \int_0^s i \Big[\int_0^t \tau_{t'}^{H_{s'}}(V) \, dt' ,\tau_t^{H_{s'}}(Q_u)\Big] \,ds' =\\
&\int_0^s \,ds' \int_{-\infty}^{\infty} dt\, w_{\gamma'}(t) \, \tau_t^{H_{s'}}\Big(\int_0^t i\big[\tau_{-t'}^{H_{s'}}(V),Q_u\big] \, dt' \Big),
\end{align}
where we performed a change of variables and used the linearity of the commutator to get the last line.
Note that in the last line above, we have an evolution $\tau_t^{H_{s'}}$ applied to an integral involving the commutator:
$$\big[\tau_{-t'}^{H_{s'}}(V), Q_u\big] = \sum_{(v,k)} \big[\tau_{-t'}^{H_{s'}}(V_v(k)), Q_u\big].$$
It should be clear at this point that by applying the ``telescoping sum'' procedure, described earlier in this proof, to the evolution $\tau_{-t'}^{H_{s'}}$, each commutator in the above sum is itself a sum of terms supported on larger and larger sets, with decreasing norm. To see this, note that we may bound the norm of each such term trivially:
\be
\big\|\big[\tau_{-t'}^{H^v_{s'}(k+k')}(V_v(k))-\tau_{-t'}^{H^v_{s'}(k+k'-1)}(V_v(k)), Q_u\big]\big\| \le 2\, \delta_{s',-t'}(k';V_v(k)) \, \|Q_u\|,
\ee
using the appropriate bounds on $\delta_{s',-t'}(k';V_v(k))$, as described above.
In all cases, we note that the commutator vanishes for $k+k' \le d(v,u)-3$, where $d(v,u)$ is the distance between the sites $u,v \in \Lambda$, so we only need to consider $k' \ge \max\{0,d(u,v)-k-2\}$. Finally, after localizing the terms inside the evolution $\tau_t^{H_{s'}}$, we perform another ``telescoping sum'' decomposition on this final evolution, arriving at the final decomposition of strength $J$ (because of the norm of the terms $V_v(k)$) and decay given by a function $f^{(1)}_1$, whose decay depends implicitly on the decay of $w_{\gamma'}$ and Lieb-Robinson bounds corresponding to the unitary evolutions under Hamiltonians with interactions coming from $Q_u, \, u \in \Lambda$ and $V_v(k), \, v \in \Lambda, \, k \ge 0$. The linear dependence of the bound on $s$ follows from the triangle inequality: $$\Big\|\int_0^s ds' \,A(s')\Big\| \le s \cdot \sup_{s'\in[0,s]} \|A(s')\|.$$
\end{proof}

\subsection{Absorbing the perturbation.}
In this subsection, we focus on unitarily transforming the Hamiltonian $H_s = H_0+sV$, satisfying the conditions of Definition~\ref{defn:path}, into a Hamiltonian $H'_s= H_0+V'_s + E_s \one$, with the new perturbation $V'_s$ satisfying certain stronger conditions with respect to the local ground states of $H_0$, as discussed at the beginning of this section. Since the whole process is a unitary transformation combined with a global energy shift, the spectral gap of $H_s$ is identical to that of $H_0+V'_s$.
The transformation is done in four steps:
\begin{enumerate}
\item We transform the terms $\F^u := \F(Q_u)+s\,\F(V_u)$ of $H_s = \sum_u \F^u$, satisfying $[\F^u, P_0(s)] = 0$, into terms $X_u = U^\dagger(s) \F^u U(s) -  \Fz(Q_u),$ satisfying: $$[X_u, P_0] = U^\dagger(s) [\F^u, P_0(s)] U(s) - [\Fz(Q_u),P_0]= 0,$$ where the unitary $U(s)$ satisfies $U(s) P_0 U^\dagger(s) = P_0(s)$. At this point, the Hamiltonian $H_s$ has been unitarily transformed into $H'_s := U^\dagger(s) H_s U(s) = H_0 + \sum_u X_u$.
\item We decompose $X_u = X^{(1)}_u + X^{(2)}_u + X^{(3)}_u$, where:
\begin{align*} 
X^{(1)}_u = \F(Q_u) - \Fz(Q_u),\quad
X^{(2)}_u =U^\dagger(s) \F(Q_u) U(s) -  \F(Q_u), \quad
X^{(3)}_u = s\, U^\dagger(s) \F(V_u) U(s),
\end{align*}
and show that each term is a strength $J$ perturbation with decay given by functions $f^{(1)}_1$, $f^{(2)}_1$ and $f^{(3)}_1$, respectively, using Lemma~\ref{lemma:global} and properties of the unitary $U(s)$~\cite{BMNS:2011,hastings:2010}. 
\item Using $[X_u,P_0]=0$, we rewrite $X_u = (1-P_0) \X_u (1-P_0) + \Delta_u + c(X_u)\one$,
where $\X_u= X_u - c(X_u) \one$, $\Delta_u = P_0 \X_u P_0$ and $c(X_u)= \Tr (P_0 X_u)/\Tr P_0$. We show $\|\Delta_u\|$ decays rapidly in $L^*$, using \emph{Local-TQO} and the decomposition of $X_u$ from the previous step.
\item For a $(J,f_1)$ perturbation $X_u$ satisfying $[X_u,P_0]=0$, we prove that $W_u := (1-P_0) \X_u (1-P_0)$ is a $(J,w)$ perturbation, satisfying $W_u(r) P_{b_u(r)}=0,$ with the decay $w(r)$ closely related to $f_1(r)$.
\end{enumerate} 
Note that throughout the transformation process, the perturbations $X_u, W_u$ and $\Delta_u$ depend implicitly on the parameter $s$.
Nevertheless, the locality and bounds on the strength of the perturbations are independent of $s$ (some bounds have a linear dependence on $s$, but since $s\in[0,1]$, this makes the bounds only stronger.)

We state the end result of the above process as a Proposition:
\begin{prop}\label{prop:local_decomposition}
Let $H_s$ be the family of Hamiltonians considered in Definition~\ref{defn:path}. Then, there exist unitaries $U(s)$, such that:
$U^\dagger(s) H_s U(s) = H_0 + W + \Delta + E_0(s) \one,$
with $W$ a $(J,w)$-perturbation satisfying, $W_u(r) P_{b_u(r)}=0$, $\|\Delta\|$ rapidly decaying in $L$ and $E_0(s)=\Tr(H_s P_0(s))/\Tr P_0(s)$.
\end{prop}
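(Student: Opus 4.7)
The plan is to execute the four-step transformation outlined in the paper, carefully controlling quasi-locality bounds at each stage.

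For step~1, I would apply Lemma~\ref{lemma:global} to rewrite $H_s = \sum_u \F^u$, where $\F^u := \F(Q_u) + s\,\F(V_u)$ commutes with $P_0(s)$ and admits a rapidly decaying quasi-local decomposition around $u$. I would then invoke the quasi-adiabatic continuation (spectral flow) unitary $U(s)$ of Hastings and of Bravyi--Michalakis--Nachtergaele--Sims, which satisfies $U(s)\,P_0\,U^\dagger(s) = P_0(s)$ and obeys Lieb--Robinson-type quasi-locality bounds for conjugation of local operators. Setting $X_u := U^\dagger(s)\,\F^u\,U(s) - \Fz(Q_u)$ and using $\Fz(H_0) = H_0$ from Lemma~\ref{lemma:global}(i) gives $H_0 + \sum_u X_u = U^\dagger(s) H_s U(s)$, while $[X_u, P_0] = 0$ follows from $[\F^u, P_0(s)] = 0$ combined with property (ii) of Lemma~\ref{lemma:global} applied at $s=0$ to $\Fz(Q_u)$.

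For step~2, I would split $X_u = X_u^{(1)} + X_u^{(2)} + X_u^{(3)}$ exactly as in the outline. The decay of $X_u^{(1)} = \F(Q_u) - \Fz(Q_u)$ is given by Lemma~\ref{lemma:global}(v); the decays of $X_u^{(2)} = U^\dagger(s)\F(Q_u)U(s) - \F(Q_u)$ and $X_u^{(3)} = s\,U^\dagger(s)\F(V_u)U(s)$ follow by combining Lemma~\ref{lemma:global}(iv) with the Lieb--Robinson bounds on conjugation by the spectral-flow unitary. Summing produces $X_u = \sum_r X_u(r)$ with $\|X_u(r)\| \le J\,f_1(r)$ for a rapidly decaying $f_1$. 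Since $[X_u,P_0]=0$, the identity
\[
X_u = (1-P_0)\X_u(1-P_0) + P_0 \X_u P_0 + c(X_u)\,\one,
\]
with $\X_u := X_u - c(X_u)\one$, is exact because the off-diagonal cross terms vanish. Setting $\Delta_u := P_0 X_u P_0 - c(X_u) P_0$ and applying \emph{Local-TQO} to each $X_u(r)$ with $\ell = L - r$ (so that $A(\ell) = \Lambda$ and $P_{A(\ell)} = P_0$) gives $\|\Delta_u\| \lesssim J \sum_{r \le L^*} f_1(r)\,\Delta_0(L-r) + J\sum_{r > L^*} f_1(r)$, which is rapidly decaying in $L$. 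Summing over the $O(L^d)$ sites yields $\Delta := \sum_u \Delta_u$ with $\|\Delta\|$ still rapidly decaying in $L$, provided $\Delta_0$ outruns the site count, which is ensured for the rapidly decaying functions in play.

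The hardest step is step~4: refining $W_u := (1-P_0)\X_u(1-P_0)$ into an explicit quasi-local decomposition $W_u = \sum_r W_u(r)$ with $W_u(r)$ supported on $b_u(r)$ and $W_u(r) P_{b_u(r)} = 0$. The natural candidate is a telescoping construction built from $(1-P_{b_u(r)})\X_u(r)(1-P_{b_u(r)})$, which is manifestly supported on $b_u(r)$ and annihilates $P_{b_u(r)}$. The real content is to show that the discrepancy between $\sum_r W_u(r)$ and $W_u$ is controllable and can be absorbed into $\Delta$ while preserving rapid decay in $L$. Using the frustration-free inequality $P_0 \le P_{b_u(r)}$ from Corollary~\ref{cor:local_energy} and writing $(1-P_0) = (1-P_{b_u(r)}) + (P_{b_u(r)} - P_0)$, the error expands into cross terms containing $(P_{b_u(r)} - P_0)$, which are bounded by a nested application of \emph{Local-TQO} on balls $b_u(r+\ell)$ with $\ell$ large enough to exploit Corollary~\ref{cor:local_tqo} and reduce $P_{b_u(r+\ell)}$ to $P_0$ up to a rapidly decaying error. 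Finally, $E_0(s) = \sum_u c(X_u) = \Tr(H_s P_0(s))/\Tr P_0(s)$ follows by sandwiching the identity $U^\dagger(s) H_s U(s) = H_0 + W + \Delta + E_0(s)\,\one$ with $P_0$ and using $H_0 P_0 = 0$, $\Tr(\Delta_u P_0) = 0$, and $P_0 U^\dagger(s) H_s U(s) P_0 = E_0(s)\,P_0$.
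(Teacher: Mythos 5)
Your proposal tracks the paper's steps~1--3 faithfully: the introduction of $\F^u$, the spectral-flow unitary $U(s)$, the three-term split $X_u = X^{(1)}_u+X^{(2)}_u+X^{(3)}_u$, and the identification $\Delta_u = P_0\X_u P_0$ all match, and the trace argument for $E_0(s)$ is essentially the paper's (the paper just uses cyclicity directly, which is cleaner than sandwiching with $P_0$).

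The gap is in step~4. Your candidate $W_u(r):=(1-P_{b_u(r)})\X_u(r)(1-P_{b_u(r)})$ does satisfy $W_u(r)P_{b_u(r)}=0$, but the discrepancy $W_u-\sum_r W_u(r)$ is \emph{not} small and cannot be absorbed into $\Delta$. Writing $1-P_0=(1-P_{b_u(r)})+(P_{b_u(r)}-P_0)$, the cross terms are of the form $(1-P_{b_u(r)})\X_u(r)(P_{b_u(r)}-P_0)$ and its adjoint, i.e.\ they contain the \emph{off-diagonal} block $(1-P_{b_u(r)})\X_u(r)P_{b_u(r)}$. \emph{Local-TQO}, and Corollary~\ref{cor:local_tqo}, only control the \emph{diagonal} block $P_{A(\ell)}O_AP_{A(\ell)}$; they say nothing about $(1-P_{A(\ell)})O_AP_{A(\ell)}$, which is generically of order $\|\X_u(r)\|$. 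Each $\X_u(r)$ commutes with neither $P_{b_u(r)}$ nor $P_{b_u(r+\ell)}$ (only the full sum $\X_u$ commutes with the global $P_0$), so enlarging the ball to $b_u(r+\ell)$ as you suggest still leaves an $O(\|\X_u(r)\|)$ off-diagonal error at each $u$, and summing over $u\in\Lambda$ gives $O(L^d J)$, not something decaying in $L$. The paper sidesteps this by a genuinely different decomposition: it introduces shell projectors $E_m=P_{b_u(m-1)}-P_{b_u(m)}$ (so that $\one-P_0=\sum_{m\le L}E_m$), expands $W_u=\sum_{p,r,q}E_p\X_u(q)E_r$, and then \emph{reorganizes} the off-diagonal pieces into the terms $Y_u(j)=\sum_{p+r=j}E_p\sum_{q\le\ell(p,r)}\X_u(q)E_r$. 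These annihilate $P_{b_u(j)}$ \emph{exactly} (since $E_rP_{b_u(j)}=0$ for $r\le j$) and have rapidly decaying norm by the chain of estimates starting from Corollary~\ref{cor:local_tqo}; there is no residual error to push into $\Delta$. Your telescoping blocks $(1-P_{b_u(2q-1)})\X_u(q)(1-P_{b_u(2q-1)})$ do appear in the paper as the $Z_u(2q-1)$ terms (note the doubled radius), but they only account for the ``diagonal-in-shell'' part of $W_u$; without the $Y_u(j)$ terms the decomposition does not close.
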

\begin{proof}
Using Lemma~\ref{lemma:global}, we see that the first step implies $H'_s := U^\dagger(s) H_s U(s) = U^\dagger(s) \F(H_s) U(s) = \Fz(H_0) + \sum_u X_u= H_0 + \sum_u X_u$, where $[X_u,P_0]=0$. The second and third step imply: $$H'_s = H_0 + \sum_u (1-P_0) \X_u (1-P_0) + \Delta + c(X) \one,$$ for $\Delta = \sum_u \Delta_u$ a vanishingly small perturbation and $X= \sum_u X_u$. Noting that $X = U^\dagger(s) H_s U(s) - H_0$ and $P_0(s) = U(s) P_0 U^\dagger(s)$, the cyclicity of the trace implies:
$$c(X) = \Tr(H_s U(s) P_0 U^\dagger(s))/\Tr P_0 -\Tr(H_0 P_0)/\Tr P_0= \Tr \big(H_s P_0(s)\big)/\Tr P_0(s),$$ so 
$H'_s = H_0 + (1-P_0) \X (1-P_0) + \Delta + E_0(s) \one.$ 
Finally, setting $W = \sum_u W_u$ and using the second and fourth steps, we get the desired result.
\end{proof}
We focus, now, on proving the last three steps above, since the first step follows immediately from the properties of $\F$ and the quasi-adiabatic evolution $U(s)$ (see, e.g. \cite[Prop. 2.4]{BMNS:2011}.) For the second step above, we make extensive use of results obtained in~\cite{BMNS:2011,hastings:2010}. For an overview of the ideas behind Hastings' quasi-adiabatic evolution $U(s)$ and its many applications to problems in condensed matter physics and quantum information theory, we refer the reader to~\cite{hastings:2010}.
Below, we reference results from~\cite{BMNS:2011}, from which one may extract explicit bounds on the decay of $X^{(2)}_u$ and $X^{(3)}_u$. The properties of $X^{(1)}_u$ follow directly from Lemma~\ref{lemma:global}.
\begin{lem}\label{lemma:local}
Let $H_s$ be the family of gapped Hamiltonians described in Definition~\ref{defn:path} and define $\G(O_u(r)) := \alpha_s^{\Lambda} (O_u(r))$, where $\alpha_s^{\Lambda}$ is defined in~\cite{BMNS:2011} as the spectral flow corresponding to $H_s$ and $O_u(r)$ is an operator supported on $b_u(r)$. Then, 
\be\label{flow_decomposition}
\G(O_u(r)) = \sum_{r' = r_0}^{L-r} \G(r';O_u(r)), \quad r_0 \ge 0,
\ee
with $\G(r';O_u(r))$ supported on $b_u(r+r')$ and 
\begin{align}\label{flow_bound}
\|\G(r';O_u(r))\| &\le J \, \|O_u(r)\| \, g(r'), \quad r' > r_0, \\
\|\G(r_0;O_u(r))-O_u(r)\| &\le sJ \, \|O_u(r)\| \, g(r_0), \quad \|\G(r_0;O_u(r))\|= \|O_u(r)\|, \nonumber
\end{align}
for a rapidly decaying function $g$.
\end{lem}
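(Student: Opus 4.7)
The plan is to follow the same telescoping strategy used in Lemma~\ref{lemma:global}, now applied to the spectral flow $\alpha_s^\Lambda$ rather than to the spectral filter $\F$. The key starting observation is that the generator $D(s)$ of the spectral flow is linear in the perturbation $V$, so it can be written as $D(s) = \sum_{v,k} D_{v,k}(s)$, where each $D_{v,k}(s)$ is built from $V_v(k)$ by convolving against a filter function supported on frequencies in $[-\gamma',\gamma']$ and a Heisenberg evolution under $H_s$. In particular, every such term inherits the prefactor $J$ from $\|V_v(k)\|\leq J f_0(k)$, which accounts for the explicit $J$-dependence in~(\ref{flow_bound}).

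Let $\alpha_s^{H^u_s(q)}$ denote the ``truncated'' spectral flow obtained by restricting the generator (and the Hamiltonians it involves) to the ball $b_u(q)$, in direct analogy with the truncated evolutions $\tau_t^{H^u_s(q)}$ of Lemma~\ref{lemma:global}. I would then set
\begin{align}
\G(r_0;O_u(r)) &= \alpha_s^{H^u_s(r+r_0)}(O_u(r)),\\
\G(r';O_u(r)) &= \alpha_s^{H^u_s(r+r')}(O_u(r))-\alpha_s^{H^u_s(r+r'-1)}(O_u(r)),\quad r' > r_0,
\end{align}
so that $\sum_{r'\ge r_0}\G(r';O_u(r))$ telescopes to $\alpha_s^\Lambda(O_u(r)) = \G(O_u(r))$ and each term is supported on $b_u(r+r')$ as claimed in~(\ref{flow_decomposition}). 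To bound each annular difference, I would use Duhamel's formula and differentiate in $s$: this produces an integral over $s'\in[0,s]$ of a commutator of the difference of generators $D^{H^u_s(r+r')}(s')-D^{H^u_s(r+r'-1)}(s')$ with an evolved copy of $O_u(r)$. The difference of generators consists of the pieces $D_{v,k}$ associated to the annular shell $b_u(r+r')\setminus b_u(r+r'-1)$, whose supports sit at distance at least $r'-\mathcal{O}(1)$ from that of $O_u(r)$.

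The bound~(\ref{flow_bound}) for $r' > r_0$ then follows by combining three ingredients: (i) the factor $J$ inherited from $D(s)$; (ii) Lieb-Robinson bounds for the spectral flow~\cite{BMNS:2011,hastings:2010}, giving rapid decay of the relevant commutator in $r'$; and (iii) the rapid decay of both the filter function defining $D(s)$ and the perturbation $V$ itself. The leading estimate $\|\G(r_0;O_u(r))-O_u(r)\|\leq sJ\|O_u(r)\|g(r_0)$ comes from noting that $\alpha_0^{\cdot}$ is the identity, so
\begin{equation}
\alpha_s^{H^u_s(r+r_0)}(O_u(r))-O_u(r) = i\int_0^s ds'\, \big[D^{H^u_s(r+r_0)}(s'),\alpha_{s'}^{H^u_s(r+r_0)}(O_u(r))\big],
\end{equation}
and the integrand is controlled by the portion of $D(s')$ that fails to commute with $O_u(r)$; the norm identity $\|\G(r_0;O_u(r))\| = \|O_u(r)\|$ is automatic since $\alpha_s^{H^u_s(r+r_0)}$ acts by unitary conjugation.

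The main obstacle is extracting the actual decay function $g(r')$ in the annular step, since one must chase through three nested, competing bounds: the almost-exponential decay of the spectral filter in $|t|$, the exponential growth of the Lieb-Robinson light cone for Heisenberg evolution under $H_s$, and the decay $f_0$ of $V$. An optimal choice of the time cutoff $T$, along the lines of the optimization performed in Lemma~\ref{lemma:global}, balances these contributions and produces a subexponential $g(r')$ fast enough for all subsequent applications. These technical estimates are the content of Appendix~\ref{lr-estimates2}.
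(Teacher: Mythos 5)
Your telescoping decomposition, the Duhamel bound for the annular differences, and the deferral of the explicit decay constants to Appendix~\ref{lr-estimates2} all match the paper's proof. Two points deserve correction. First, the paper's truncated flow is $\alpha_s^Y(A) = U_Y^\dagger(s)\,A\,U_Y(s)$ with $U_Y$ generated by $D_Y(s) = \sum_{Z\subset Y}\Psi_\Lambda(Z,s)$; this truncates only the quasi-local support decomposition of the \emph{global} spectral-flow generator, while keeping the Heisenberg evolution under the full $H_s$ inside each $\Psi_\Lambda(Z,s)$. Your $\alpha_s^{H^u_s(q)}$, defined ``in direct analogy with $\tau_t^{H^u_s(q)}$'', suggests additionally replacing $H_s$ by $H^u_s(q)$, which is a different (though workable) construction; however, the machinery of Theorem 5.2 in \cite{BMNS:2011} that you invoke, and hence the Appendix, is phrased in terms of the first. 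Second, the claim that the generator pieces in $D_{\Lambda_{r'}}-D_{\Lambda_{r'-1}}$ ``sit at distance at least $r'-\mathcal{O}(1)$'' from $O_u(r)$ is not correct: a piece sourced by $V_v(k)$, with $b_v(k)$ just escaping $b_u(r+r'-1)$, can extend inward to within distance about $r'-2k$ of $b_u(r)$ and overlaps it once $k$ is of order $r'/2$, and the generator piece $D_{v,k}$ itself has quasi-local tails beyond $b_v(k)$. What controls these near-contact contributions is not distance alone but the decay $\|V_v(k)\|\le J f_0(k)$ together with the filter tails and Lieb-Robinson bounds -- the very three competing decays you list at the end, and the balancing act carried out by the four-term estimate in Appendix~\ref{lr-estimates2}.
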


\begin{proof}
We introduce the decomposition $\G(O_u(r)) = \sum_{r' = r_0}^{L-r} \G(r';O_u(r))$:
\begin{align}
&\G(r';O_u(r)) =\alpha_s^{\Lambda_{r'}}(O_u(r))-\alpha_s^{\Lambda_{r'-1}}(O_u(r)),\quad r'\ge r_0+1,\\
&\G(r_0;O_u(r)) = \alpha_s^{\Lambda_{r_0}}(O_u(r))\quad r_0 \ge 0,
\end{align}
where we have defined $\Lambda_{n} = b_u(r+n), \, n \ge 0$ and used Eq. (4.42-4.46) of ~\cite{BMNS:2011} to define $\alpha_s^Y(A) = U^\dagger_{Y}(s) \, A\, U_{Y}(s)$:
\be
\partial_s U_{Y}(s) = i \,D_Y(s)\, U_{Y}(s), \quad U_{Y}(0) =\one, \quad D_Y(s) = \sum_{Z \subset Y} \Psi_{\Lambda}(Z,s).
\ee
First, notice that since each term $\Psi_{\Lambda}(Z,s)$ is supported on $Z$ by definition (see \cite{BMNS:2011}), the terms $\G(r';O_u(r))$ have support on $b_u(r+r')$, as desired. Moreover, since $\G(r_0;O_u(r)) = \alpha_s^{\Lambda_{r_0}}(O_u(r)) = U^\dagger_{\Lambda_{r_0}}(s) \, O_u(r)\, U_{\Lambda_{r_0}}(s)$, we immediately get $\|\G(r_0;O_u(r))\|= \|O_u(r)\|$.
To bound the norm of $\G(r';O_u(r))$ for $r' > r_0$, we need to bound $\|\alpha_s^{\Lambda_{r'}}(O_u(r)) - \alpha_s^{\Lambda_{r'-1}}(O_u(r))\|$. As we show in Appendix~\ref{lr-estimates2}, the norm $\|\alpha_s^{\Lambda_{r'}}(O_u(r)) - \alpha_s^{\Lambda_{r'-1}}(O_u(r))\|$ is bounded above by:
\be\label{bound:Xu3}
J \, \|O_u(r)\| \, g(r') ,
\ee
for a rapidly decaying function $g$. A similar estimate shows that:
\be
\|\G(r_0;O_u(r))-O_u(r)\| \le \int_0^s ds'\, \|[D_{\Lambda_{r_0}}(s'), \alpha_s^{\Lambda_{r_0}}(O_u(r))]\| \le sJ \, \|O_u(r)\| \, g(r_0).
\ee
\end{proof}

The third step makes use of the \emph{Local-TQO} condition, allowing us to extract the groundstate energy of $H_s$ as a single quantity $E_0(s)$, up to a small correction. In other words, the following Lemma asserts the vanishingly small splitting of the groundstate subspace for the gapped Hamiltonians $H_s$.
\begin{lem}\label{lemma:D_u}
Let $X_u$ be a $(J,f_1)$-perturbation centered on $u\in\Lambda$. Then, Assumption~\ref{def:local_tqo} implies for $\Delta_u=P_0 \X_u P_0$, $\X_u = X_u-c(X_u) \one$, $c(X_u) = \Tr (X_uP_0) / \Tr P_0$:
\be
\|\Delta_u\| \le J \Big(\sum_{r=0}^{L^{*}}f_1(r) \Delta_{0}(L-r) +2 \sum_{r>L^{*}}^L f_1(r)\Big).\label{decay2}
\ee
\end{lem}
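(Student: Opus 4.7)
The plan is to exploit the quasi-local structure of $X_u$: write $X_u = \sum_{r=0}^{L} X_u(r)$ with $X_u(r)$ supported on $b_u(r)$ and $\|X_u(r)\| \le J f_1(r)$, and split the bound on $\|\Delta_u\|$ into two regimes according to whether $r \le L^\ast$ (where \emph{Local-TQO} applies) or $r > L^\ast$ (where only a trivial norm bound is available). Defining the additive constants $c(X_u(r)) := \Tr\!\big(X_u(r)\, P_0\big)/\Tr P_0$, linearity of the trace gives $c(X_u) = \sum_{r} c(X_u(r))$, so
\[
\Delta_u \;=\; P_0 X_u P_0 - c(X_u)\,P_0 \;=\; \sum_{r=0}^{L}\bigl(P_0 X_u(r) P_0 - c(X_u(r))\, P_0\bigr),
\]
and it suffices to bound each summand individually and apply the triangle inequality.

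For the short-range terms $r \le L^\ast$, I would apply the \emph{Local-TQO} hypothesis (Definition \ref{def:local_tqo}) to $O_A := X_u(r)$ with $A = b_u(r)$ at the largest allowed thickening parameter $\ell = L - r$. Under the periodic-boundary assumption on $\Lambda = [0,L]^d$, the thickened ball $A(L-r) = b_u(L)$ covers the entire lattice, so Corollary \ref{cor:local_energy} yields $P_{A(L-r)} = P_\Lambda(0) = P_0$; in particular, the normalized trace $c_\ell(X_u(r))$ appearing in Definition \ref{def:local_tqo} coincides with our $c(X_u(r))$. \emph{Local-TQO} then gives directly
\[
\bigl\|P_0 X_u(r) P_0 - c(X_u(r))\, P_0\bigr\| \;\le\; \|X_u(r)\|\,\Delta_0(L-r) \;\le\; J\, f_1(r)\,\Delta_0(L-r).
\]

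For the long-range terms $r > L^\ast$, where \emph{Local-TQO} is no longer available, I would fall back on the trivial estimates $\|P_0 X_u(r) P_0\| \le \|X_u(r)\|$ and $|c(X_u(r))| \le \|X_u(r)\|$, which together bound each summand by $2\|X_u(r)\| \le 2 J f_1(r)$. Combining the two regimes through the triangle inequality produces exactly the claimed bound.

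The argument is essentially mechanical; the only subtleties worth flagging are the bookkeeping choice $c(X_u(r)) := \Tr(X_u(r)P_0)/\Tr P_0$, which guarantees the additive identity $\sum_r c(X_u(r)) = c(X_u)$ required to reassemble $\Delta_u$ term by term, and the geometric fact that $b_u(L) \supseteq \Lambda$ under periodic boundary conditions, which is what allows \emph{Local-TQO} at $\ell = L - r$ to collapse to a statement about the \emph{global} projection $P_0$ and thus to justify the decay factor $\Delta_0(L-r)$ in the short-range regime.
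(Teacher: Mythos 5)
Your proof is correct and follows essentially the same path as the paper: decompose $X_u$ into the radial shells $X_u(r)$, apply \emph{Local-TQO} at maximal thickening $\ell = L-r$ for $r \le L^*$ (so that $P_{A(\ell)} = P_0$ and $c_\ell$ coincides with $c$), use the trivial bound $\|X_u(r)\| + |c(X_u(r))| \le 2Jf_1(r)$ for $r > L^*$, and sum. The one thing you spell out that the paper leaves implicit is the identification $P_{b_u(L)} = P_0$ under periodic boundary conditions, which is a welcome clarification but not a different argument.
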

\begin{proof}
By assumption, $X_u = \sum_{r\ge 0} X_u(r)$, with $X_u(r)$ supported on $b_u(r)$ and $\|X_u(r)\| \le J\, f_1(r)$.
We introduce an energy shift by redefining the local terms:
\be
\X_{u}(r) := X_{u}(r)-c(X_{u}(r))\one, \quad c(X_u(r)) = \Tr (P_0 X_u(r))/\Tr P_0.
\ee
From \emph{Local-TQO} we know that for $r\le L^{*}$, 
\be\label{bnd:1}
\|P_0\X_{u}(r)P_0\|\le \|X_{u}(r)\| \, \Delta_{0}(L-r).
\ee
On the other hand, for $r > L^*$, we have:
\be\label{bnd:2}
\|P_0\X_{u}(r)P_0\|\le \|X_u(r)\| + |c(X_u(r))| \le 2J f_1(r).
\ee
Using (\ref{bnd:1}) and (\ref{bnd:2}), we get the desired bound: 
\begin{align}
\|P_0\X_u P_0\| \leq J \Big(\sum_{r=0}^{L^{*}}f_1(r) \Delta_{0}(L-r) +2 \sum_{r>L^{*}}^L f_1(r)\Big).
\end{align}
\end{proof}

We now move to the last step of our transformation.
\begin{lem} Consider a frustration-free Hamiltonian $H_0$ and a perturbation $X_u$ with strength $J$ and decay $f_1(r)$, satisfying $[X_u,P_0]=0$. Set $W_u = (1-P_0) \X_u (1-P_0)$, where $\X_u = X_u - c(X_u) \one$. Then, $W_u$ is a strength $J$ perturbation with decay $w(r)$ given below and $W_u(r) P_{b_u(r)}=0$.
\end{lem}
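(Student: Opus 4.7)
The plan is a telescoping construction centered on the monotone family of local ground-state projectors $P^r:=P_{b_u(r)}$ together with the cumulative perturbation $\hat{\X}_u(r):=\sum_{r'\le r}\X_u(r')$, where $\X_u(r'):=X_u(r')-c(X_u(r'))\one$ and $c(X_u(r')):=\Tr(P_0 X_u(r'))/\Tr P_0$. First I would define
\begin{equation*}
\hat{W}_u(r):=(1-P^r)\,\hat{\X}_u(r)\,(1-P^r),\qquad W_u(r):=\hat{W}_u(r)-\hat{W}_u(r-1),
\end{equation*}
with $\hat{W}_u(-1):=0$. Each $\hat{W}_u(r)$ is supported on $b_u(r)$ and, since $\hat{W}_u(r-1)$ is supported on $b_u(r-1)\subset b_u(r)$, so is the increment $W_u(r)$. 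Telescoping over $r=0,\dots,L$ yields $\sum_r W_u(r)=\hat{W}_u(L)=(1-P_0)\,\X_u\,(1-P_0)=W_u$, using $P^L=P_0$ once $b_u(L)$ covers $\Lambda$.

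The orthogonality property $W_u(r)P^r=0$ follows at once from Corollary~\ref{cor:local_energy}: the outer $(1-P^r)$ annihilates $\hat{W}_u(r)P^r$, while the nested identity $P^{r-1}P^r=P^r$ (valid for exact frustration-free ground-state projectors of nested regions) gives $(1-P^{r-1})P^r=0$, hence $\hat{W}_u(r-1)P^r=0$ as well.

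For the norm bound I split
\begin{equation*}
W_u(r)=(1-P^r)\,\X_u(r)\,(1-P^r)+\bigl[(1-P^r)\,\hat{\X}_u(r-1)\,(1-P^r)-(1-P^{r-1})\,\hat{\X}_u(r-1)\,(1-P^{r-1})\bigr].
\end{equation*}
The first summand is bounded directly by $\|\X_u(r)\|\le 2Jf_1(r)$. For the second, set $B:=1-P^{r-1}$ and $C:=P^{r-1}-P^r$; using the commutation $P^{r-1}P^r=P^rP^{r-1}=P^r$ (from Corollary~\ref{cor:local_energy}) one checks that $C$ is itself a projection and that $BC=CB=0$. These algebraic identities reduce the cross-term to $B[\hat{\X}_u(r-1),C]+[C,\hat{\X}_u(r-1)]B+C\,\hat{\X}_u(r-1)\,C$. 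Expanding $\hat{\X}_u(r-1)=\sum_{r'\le r-1}\X_u(r')$ and invoking Local-TQO (Definition~\ref{def:local_tqo}) on each summand $\X_u(r')$ with $\ell\in\{r-1-r',\,r-r'\}$ replaces the actions of $P^{r-1}$ and $P^r$ on $\X_u(r')$ by scalar multiples of the respective projectors modulo errors of norm $\lesssim\|\X_u(r')\|\,\Delta_0(\ell)$; combined with the estimate $|c_\ell(\X_u(r'))|\lesssim\|\X_u(r')\|\,\Delta_0(L-r')$ (a direct consequence of Local-TQO together with $c(\X_u(r'))=0$, already used in Lemma~\ref{lemma:D_u}), this produces a convolution-type bound
\begin{equation*}
\|W_u(r)\|\;\le\;J\,w(r),\qquad w(r)\;\lesssim\;f_1(r)+\sum_{r'=0}^{r-1}f_1(r')\,\Delta_0(r-1-r').
\end{equation*}

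The main obstacle is the tail of the convolution above, where $r'$ is close to $r-1$ and Local-TQO degenerates ($\ell\to 0$); in that regime I would fall back on the trivial estimate $\|\X_u(r')\|\le 2Jf_1(r')$ and exploit the rapid decay of $f_1$ itself to absorb the contribution. Balancing the two regimes against one another is the only non-trivial book-keeping, and the net outcome is that $w(r)$ decays rapidly whenever both $f_1$ and $\Delta_0$ do, which is exactly the hypothesis on the perturbation and the Local-TQO assumption on $H_0$.
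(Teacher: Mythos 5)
Your telescoping construction $W_u(r) = \hat W_u(r) - \hat W_u(r-1)$ with $\hat W_u(r) = (1-P^r)\hat{\X}_u(r)(1-P^r)$ is a genuinely different decomposition from the paper's, which expands $(1-P_0) = \sum_m E_m$ with the orthogonal shells $E_m = P_{b_u(m-1)} - P_{b_u(m)}$ and groups the $E_p\,\X_u(q)\,E_r$ blocks according to $j = p + r$. Your version has the advantage of being mechanically clean: the support, the telescoping to $W_u$, and the annihilation $W_u(r)P^r = 0$ all come out in one line from the nesting $P^{r-1}P^r = P^r$.

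However, the norm bound is where the argument breaks, and the break is not a matter of bookkeeping. You assert that Local-TQO lets you ``replace the actions of $P^{r-1}$ and $P^r$ on $\X_u(r')$ by scalar multiples of the respective projectors.'' Definition~\ref{def:local_tqo} only controls the \emph{diagonal} sandwich $P_{A(\ell)}\,O_A\,P_{A(\ell)} - c_\ell(O_A)P_{A(\ell)}$; it says nothing about the off-diagonal block $(1-P_{A(\ell)})\,O_A\,P_{A(\ell)}$. In your decomposition $BAC + CAB + CAC$ (with $B = 1 - P^{r-1}$, $C = P^{r-1}-P^r$), only the $CAC$ piece is a diagonal sandwich of $P^{r-1}$; the terms $BAC$ and $CAB$ contain the off-diagonal block $(1-P^{r-1})\,\X_u(r')\,P^{r-1}$. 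This block is generically of order $\|\X_u(r')\|$ no matter how far $b_u(r')$ sits inside $b_u(r-1)$: a local operator deep in the bulk does not commute, even approximately, with the local ground-state projector — indeed, forcing such commutation (quasi-locally, and only with the \emph{global} $P_0$) is the entire point of Lemma~\ref{lemma:global}. A concrete check: for $\X_u(0) = \sigma^x_0$ with $c(\sigma^x_0) = 0$, one has $\|\X_u(0)\,C\|^2 = \|C\,|\X_u(0)|^2\,C\| = \|C\| = 1$ whenever $C \neq 0$, with no decay in $r$. So the proposed $w(r) \lesssim f_1(r) + \sum_{r'} f_1(r')\Delta_0(r-1-r')$ cannot be right.

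The construction is salvageable, but the correct estimate runs through a different channel than the one you invoked. Use $\|BAC\| \le \|\hat{\X}_u(r-1)\,P^{r-1}\|$ (and similarly for $CAB$, $CAC$) and bound the \emph{one-sided} norm $\|\hat{\X}_u(r-1)\,P^{r-1}\|$ by (i) splitting the sum at a cutoff $m \approx (r-1)/2$, (ii) applying Corollary~\ref{cor:local_tqo} to the near part to reduce to $\|\sum_{r'\le m}\X_u(r')\,P_0\|$ plus a $\sqrt{\Delta_0((r-1)-m)}$ correction, and (iii) using the \emph{global} relations $W_u P_0 = 0$ and $\X_u P_0 = \Delta_u P_0$ to control $\|\sum_{r'\le m}\X_u(r')\,P_0\|$ by $\|\Delta_u\|$ plus a tail of $f_1$. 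That is exactly the content of the paper's estimate~(\ref{bound1}). Two consequences you should note: the resulting $w(r)$ involves $\sqrt{\Delta_0}$ (not $\Delta_0$, which you wrote), and the bound necessarily depends on $\|\Delta_u\|$ — the global annihilation of $P_0$ by $\X_u$ is indispensable, and cannot be replaced by purely local Local-TQO information at scale $r$.
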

\begin{proof} 
By definition, $W_{u}P_0=0$. Moreover, using $[X_u,P_0]=0$, we have: $\X_u = W_u + \Delta_u$, where $\Delta_u = P_0 \X_u P_0$. This implies that:
\be
W_u = \sum_{r\ge 0}^{L-1} \X_u(r) + (\X_u(L)- \Delta_u).
\ee
Let us choose $\ell\leq L^{*}$. We apply Corollary $\ref{cor:local_tqo}$ to $\sum_{r=0}^{\ell}\X_{u}(r)$:
\begin{align}
&\big\|\sum_{r=0}^{\ell}\X_{u}(r)\,P_{b_u(2\ell)}\big\| \leq\big\|\sum_{r=0}^{\ell}\X_{u}(r)\,P_0\big\|+2\,\|\sum_{r=0}^{\ell}\X_{u}(r)\|\,\sqrt{\Delta_{0}(\ell)}\\
& \leq\underset{0}{\underbrace{\|W_{u}P_0\|}}+\big\|\sum_{r>\ell}^{L}\X_{u}(r)P_0\big\|+\|\Delta_u\| + 2\,\|\sum_{r=0}^{\ell}\X_{u}(r)\|\,\sqrt{\Delta_{0}(\ell)}\\
& \leq 2J\Big(\sum_{r>\ell}^{L} f_1(r) +2\sum_{r=0}^{\ell} f_1(r)\,\sqrt{\Delta_{0}(\ell)} \Big)+ \|\Delta_u\|,\label{bound1}
\end{align}
where we used $\|\X_u(r)\| \le 2\|X_u(r)\|$.
We turn our attention to the final decomposition of $W_u$ into interaction terms that are annihilated by local ground states.

Define an orthogonal unity decomposition $\one=\sum_{m=1}^{L+1}E_{m}$, as in \cite[Sec. 4]{bravyi:2010b}:
\begin{align*}
E_{m} & =P_{b_{u}(m-1)}-P_{b_{u}(m)},\quad 1\leq m\leq L+1, \quad P_{u} =\one,\quad P_{b_u(L+1)}=0.
\end{align*}
Using $\sum_{1\le p,r \le L} = \sum_{j\ge 2}^{2L} \sum_{p+r=j}$ and $\sum_{r\le p\le s}E_{p} = P_{b_u(r-1)}-P_{b_u(s)}$, we get: 
\[ W_{u}:=(1-P_0)\X_u (1-P_0) = \sum_{j\ge2}^{L-1}Y_u(j) + Y_u(L)+\sum_{2q=2}^{L} Z_u(2q-1) + Z_u(L),\]
where we defined:
\begin{align*}
&Y_u(j) :=\sum_{p+r=j} E_{p}\, \sum_{q=0}^{\ell(p,r)}\X_{u}(q)\,E_{r}, \quad 2\le j \le L-1,\\
&Y_u(L) := \sum_{p+r = L}^{2L} E_{p}\, \sum_{q=0}^{\ell(p,r)}\X_{u}(q)\,E_{r},\quad \ell(p,r) := \lfloor\max(p,r)/2\rfloor,\\
&Z_u(2q-1) := (\one-P_{b_u(2q-1)})\, \X_{u}(q)\, (\one-P_{b_u(2q-1)}), \quad 2\le 2q \le L,\\
&Z_u(L) := (1-P_0)\sum_{q > L/2} \X_u(q) (1-P_0).
\end{align*}

Note that $Y_u(j)P_{b_j(u)}=0$ and $Z_u(2q-1)P_{b_u(2q-1)}=0$. In other words, $Y_u(j)$ and $Z_u(2q-1)$ are supported on $b_j(u)$ and $b_u(2q-1)$, respectively, and are annihilated by the local ground states. The same holds for the terms $Y_u(L)$ and $Z_u(L)$.
Finally: 
\begin{align}\label{bound:decay1}
&\|Z_u(2q-1)\| \le \|\X_u(q)\| \le 2J f_1(q), \quad \|Z_u(L)\| \le 2\,J \sum_{q > L/2} f_1(q),\\
&\|Y_u(j)\| \le \sum_{p+r=j}\Big\|E_{p}\,\sum_{q=0}^{\ell(p,r)}\X_{u}(q)\,E_{r}\Big\|\\
& \le 2 \sum_{p \ge j-p} \Big\|E_{p} \sum_{q=0}^{\lfloor p/2\rfloor}  \X_{u}(q)\Big\|
    \le 2 \sum_{p \ge j-p} \Big\|P_{b_u(p-1)} \sum_{q=0}^{\lfloor p/2\rfloor}  \X_{u}(q)\Big\|\\
& \le 2J \sum_{p>\lfloor j/2\rfloor}^{j-1} \Big(\sum_{r\ge\lfloor p/2\rfloor}^{L} f_1(r) + 2\, \sum_{r=0}^{\lfloor p/2\rfloor} f_1(r) \sqrt{\Delta_{0}(p)} \Big) + j\, \|\Delta_u\|,
\end{align}
where the last line follows from (\ref{bound1}). 
To bound the norm of $Y_u(L)$, we simply sum over the bounds for the norms of $Y_u(j)$:
\be\label{bound:decay2}
\|Y_u(L)\| \le 2J \sum_{j \ge L}^{2L} \sum_{p>\lfloor j/2\rfloor}^{j-1} \Big(\sum_{r\ge\lfloor p/2\rfloor}^{L} f_1(r) + 2\,\sum_{r=0}^{\lfloor p/2\rfloor} f_1(r) \sqrt{\Delta_{0}(p)} \Big) + 2L^2\, \|\Delta_u\|.
\ee
We define the function $w(r)$ as follows:
\begin{align} \label{decay1}
&w(r) = \|Y_u(r)\|, \quad r \equiv 0 \pmod 2, \quad r < L,\\
&w(r) = \|Y_u(r)\|+\|Z_u(r)\|, \quad r \equiv 1 \pmod 2\quad r < L,\nonumber\\
&w(L) = \|Y_u(L)\|+\|Z_u(L)\|.\nonumber
\end{align}
For $\Delta_{0}(r)$ and $f_1(r)$ decaying rapidly, we see from (\ref{bound:decay1}-\ref{bound:decay2}) and Lemma~\ref{lemma:D_u} that $w(r)$ is also rapidly decaying.
\end{proof} 

\section{Proof of Relative Boundedness}\label{sec:relative_bound}
We now generalize the proof of Proposition $1$ in \cite[Sec. 5]{bravyi:2010b}. 
\begin{prop}\label{prop:relative_bound}
Let $W$ be a $(J,w)$-perturbation satisfying: $W_u(r) P_{b_u(r)}=0$. For $H_0$ a frustration-free, locally-gapped Hamiltonian on $\Lambda \subset \Z^d$, with gap decay $\gamma(r)$, there exists a constant 
\be
c := C_d \sum_{k=1}^{M} \frac{r_k^d}{\gamma(r_k)} \sum_{r > r_{k-1}}^{r_k} w(r), \quad 0=r_0<r_1 \le \ldots \le r_M = L,
\ee
where $C_d$ is the volume of the unit ball in $\Z^d$, such that for arbitrary states $\ket{\psi}$:
\be
|\bra{\psi} W \ket{\psi}| \le c\cdot J\, \bra{\psi} H_0 \ket{\psi}.
\ee
\end{prop}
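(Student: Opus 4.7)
The plan is to dominate each $\langle\psi|W_u(r)|\psi\rangle$ by the local energy $\langle\psi|H_{b_u(r)}|\psi\rangle$, then sum over $u$ and $r$ and compare against $\langle\psi|H_0|\psi\rangle$. Two ingredients will drive the proof: (a) the annihilation property $W_u(r)\,P_{b_u(r)}=0$, which together with the Hermiticity of $W_u(r)$ (inherited from the explicit decomposition into $Y_u(j)$ and $Z_u(2q-1)$ blocks in the previous section) also yields $P_{b_u(r)}\,W_u(r)=0$; and (b) the Local-Gap operator inequality $H_{b_u(r)}\ge \gamma(r)\,(\one - P_{b_u(r)})$, obtained by combining Definition~\ref{def:local_gap} with the first item of Corollary~\ref{cor:local_energy}.

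First I would write $W_u(r) = Q_u^r\,W_u(r)\,Q_u^r$ with $Q_u^r := \one - P_{b_u(r)}$. Cauchy--Schwarz applied to the inner product $\langle Q_u^r\psi |W_u(r)| Q_u^r\psi\rangle$, together with the Local-Gap bound, yields
\[
|\langle \psi|W_u(r)|\psi\rangle|
\;\le\; \|W_u(r)\|\,\langle\psi|Q_u^r|\psi\rangle
\;\le\; \frac{J\,w(r)}{\gamma(r)}\,\langle\psi|H_{b_u(r)}|\psi\rangle.
\]
Next I would sum over $u\in\Lambda$ and use $H_{b_u(r)} = \sum_{b_v(1)\subset b_u(r)} Q_v$: each $Q_v$ is counted for at most $C_d\,r^d$ values of $u$, since the centers $u$ with $b_v(1)\subset b_u(r)$ all lie in a ball of radius $r$ around $v$. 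Hence
\[
\sum_u |\langle\psi|W_u(r)|\psi\rangle|
\;\le\; \frac{J\,C_d\,r^d\,w(r)}{\gamma(r)}\,\langle\psi|H_0|\psi\rangle.
\]

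To recover the partitioned constant in the statement, I would split the remaining $r$-sum into blocks $r\in(r_{k-1}, r_k]$. After harmlessly replacing $\gamma$ by its non-increasing envelope $\tilde\gamma(r) := \min_{s\le r}\gamma(s)$ (which is still a valid lower bound on the local gap, so the Local-Gap condition is preserved), the factor $r^d/\gamma(r)$ is bounded by $r_k^d/\gamma(r_k)$ uniformly on the block. Pulling that peak factor out of the inner $r$-sum and summing over $k$ reproduces exactly the constant $c$ advertised in the proposition.

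The whole argument is essentially a three-step chain (Cauchy--Schwarz, local spectral gap, volume counting), so the technical content is slight. The only place requiring genuine care is ensuring finiteness of $c$: the decay of $w(r)$ must beat the polynomial growth of $r^d/\gamma(r)$, which is exactly where the \emph{Local-Gap} hypothesis (polynomial decay of $\gamma(r)$) meets the rapid decay of $w$ inherited from $f_0$ through the constructions of Section~\ref{sec:perturbation}. The freedom to choose the partition $\{r_k\}$ left in the statement is an optimization handle that will let later applications balance these two decay rates — for instance, a dyadic choice $r_k = 2^k$ already produces a finite $c$ whenever $w$ decays faster than any polynomial and $\gamma$ decays at most polynomially.
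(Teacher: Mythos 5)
Your proof is correct, and it reaches the stated constant by exactly the same three-step chain the paper uses: (i) block-diagonality $W_u(r)=(1-P_{b_u(r)})W_u(r)(1-P_{b_u(r)})$ via Hermiticity, (ii) the local-gap operator inequality $1-P_{b_u(r)}\le H_{b_u(r)}/\gamma(r)$, and (iii) the volume count $\sum_{u\in\Lambda}H_{b_u(r)}\le C_d\,r^d\,H_0$. What you do differently is bypass the paper's apparatus of sparse colored partitions $\C_j(r)$ and the orthogonal decomposition of unity $R^{\mathcal Y}_j(r)$: that machinery, carried over from the commuting-projector argument of Bravyi--Hastings, reduces — once one notices that $\sum_{\mathcal Y}|\mathcal Y|\,R^{\mathcal Y}_j = \sum_a(1-P_{b_{j,a}})$ — to precisely the per-term estimate $|\langle\psi|W_u(r)|\psi\rangle|\le\|W_u(r)\|\langle\psi|(1-P_{b_u(r)})|\psi\rangle$ that you use directly. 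So your version is a genuine simplification of the bookkeeping, at no loss of generality or tightness. Two small remarks. First, the paper applies the gap bound only once per block, at the outer radius: it sets $\widehat W_u(r_k)=\sum_{r_{k-1}<r\le r_k}W_u(r)$, notes $\widehat W_u(r_k)P_{b_u(r_k)}=0$ (using Corollary~\ref{cor:local_energy}), and immediately gets the factor $r_k^d/\gamma(r_k)$; you instead apply the bound at each $r$ and then need the monotone-envelope replacement $\gamma\mapsto\tilde\gamma$ to pull $\max_{r\in(r_{k-1},r_k]}r^d/\gamma(r)$ out as $r_k^d/\gamma(r_k)$. Both are fine, but the paper's ordering lands on the stated $c$ without the extra envelope step, so you may as well adopt it. Second, you are right that Hermiticity of $W_u(r)$ is what makes $P_{b_u(r)}W_u(r)=0$ follow from the stated $W_u(r)P_{b_u(r)}=0$; this is indeed supplied by the explicit $Y_u$/$Z_u$ sandwich construction in Section~\ref{sec:perturbation}, and it is worth flagging because the paper itself relies on it tacitly when asserting block-diagonality with respect to the $R^{\mathcal Y}_j$.
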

\begin{proof}
First, we partition the lattice $\Lambda$ into disjoint sets $\C_j(r), \, 1\le j\le N(r)$, such that any two distinct points $u,v \in C_j(r)$ satisfy $b_u(r)\cap b_v(r)=\emptyset$. We treat the case $8r > L$ separately, using the trivial decomposition $C_j(r) = \{u_j\}, \, u_j \in \Lambda, \, 1\le j \le N(r) = L^d$.
To accomplish the decomposition for $8r \le L$, we start by dividing $\Lambda$ into an even number of boxes of size $2r'$, in each direction, such that $2r \le 2r' \le 3r$. In particular, we have $L = 2rq+r_0, \, 0\le r_0 < 2r$, so we may choose $2r' = 2r q/(q-b) + r_0/(q-b)$, where $b \equiv q \pmod 2$. Since we assumed $8r \le L$, we know that $2r\cdot 3+r_0 < L \implies q \ge 4$, which implies the bounds on $2r'$. Note that $$L/2r' = q-b \equiv 0 \pmod 2$$ and that $r'$ need not be an integer.

\begin{figure}
\includegraphics[width=8cm]{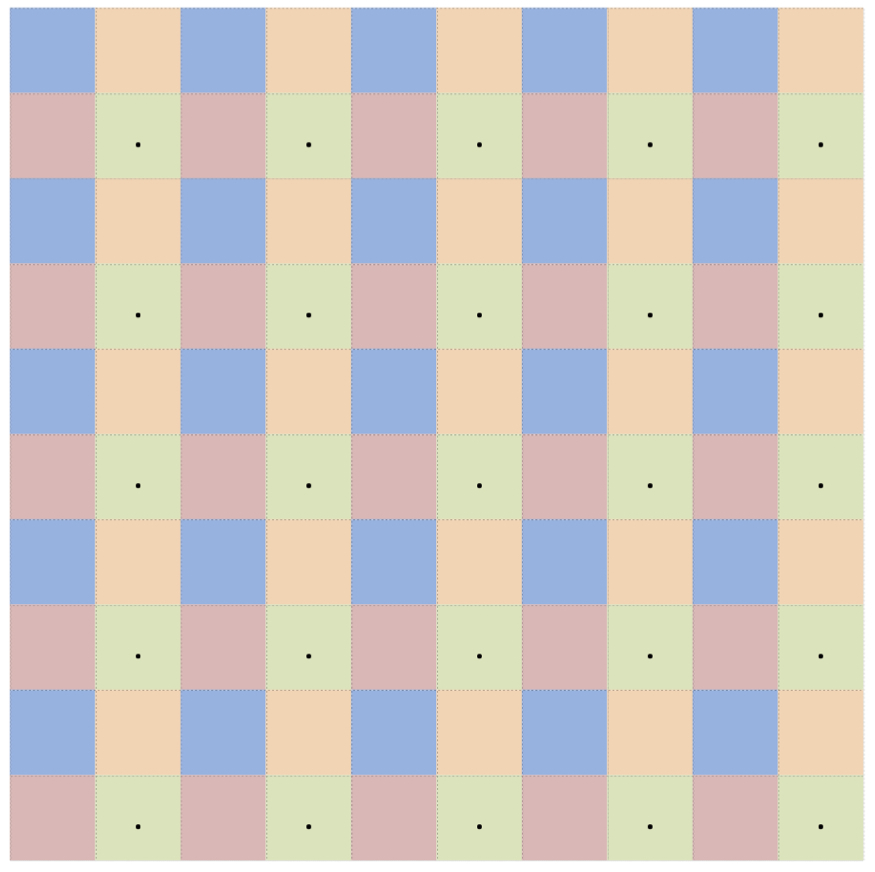}
\begin{center}
\caption{{\small{Partition of the lattice $\Lambda$ into an even number of boxes of size $2r'$ each. The black dots indicate the points $u^a_j$ inside some set $C_j(r)$. The label $a$ corresponds to the label of the box in which the point is found.}}}
\label{fig:partition}
\end{center}
\end{figure}

We build the partition one dimension at a time, as follows: In one dimension, we ``color'' every other interval of length $2r'$ on the line with the same color. So, in one dimension, we have $2$ colors. In $2$ dimensions, we fill the plane with an even number of rows of $2r' \times 2r'$ squares, where every other row looks the same (same ``color'' pattern) and within each row we have an even number of alternating colored boxes (see Fig.~\ref{fig:partition}.) We can continue like this for any dimension $d$ of our lattice, requiring $2^d$ colors in total. Now, because of the way we have constructed our partition, any two points on distinct boxes of the same color will be a distance at least $2r' \ge 2r$ apart, as desired. By choosing exactly one point $u^a_j$ from each box $B_a$ of the same color, we get a family of disjoint sets $\C_j(r)$ that cover $\Lambda$, each satisfying the desired property of internal sparsity. The total number $N(r)$ of these sets is bounded by the number of colors multiplied by the volume of each box. So, we have:
\be\label{N(r):bnd}
N(r) \le 2^d (2r')^d \le 6^{d} r^d,
\ee
and 
\be
\Lambda = \bigcup_{j \ge 1}^{N(r)} \C_j(r), \quad \mbox{with } \C_j(r)=\{u^a_j, \, a\ge 1\} \mbox{ defined above.}
\ee
From this point onward, we set:
\be
b_{j,a}(r) := b_{u^a_j}(r), \quad u^a_j \in \C_j(r), \quad a \ge 1.
\ee
Note that $[P_{b_{j,a}(r)},P_{b_{j,b}(r)}]=0$, since the projectors have disjoint supports for $a \neq b$.
Define an orthogonal decomposition of unity as follows:
\[
R^{\mathcal{Y}}_j(r)=\prod_{a=1}^{|\C_j(r)|}\left(\mathcal{Y}_{a}(1-P_{b_{j,a}(r)})+(1-\mathcal{\mathcal{Y}}_{a})P_{b_{j,a}(r)}\right), \quad \mathcal{Y} \in \{0,1\}^{|\C_j(r)|}.\]
It is easy to check that $\sum_{\mathcal{Y}} R^{\mathcal{Y}}_j(r) = \one$ and $R^{\mathcal{Y}}_j(r) R^{\mathcal{Z}}_j(r) = \delta_{\mathcal{Y,Z}} R^{\mathcal{Y}}_j(r).$

Before continuing, we define a decomposition of the perturbation $W$ as follows:
\bea\label{W:decomposition}
W &=& \sum_{k \ge 1}^M \sum_{j\ge 1}^{N(r_k)} \W_j(r_k), \quad \W_j(r_k) := \sum_{a: u^a_j \in \C_j(r_k)} \W_{u^a_j}(r_k),\\
 \W_{u}(r_k) &:=& \sum_{r > r_{k-1}}^{r_k} W_u(r),  \quad 0 = r_0 < r_1 \le r_2\le \ldots \le r_M = L.
\eea
Since $W(r)$ is locally block-diagonal, we have $$W_{u^a_j}(r) P_{b_{j,a}(r)} = 0 \implies \W_{u^a_j}(r_k) P_{b_{j,a}(r_k)}=0.$$ 
Moreover, recalling that $w(r) := \sup_u \|W_{u}(r)\|/J$, we have:
\be
\|\W_{u^a_j}(r_k)\| \le J \sum_{r > r_{k-1}}^{r_k} w(r).
\ee
Setting $\hat{w}(r_k) := \sum_{r > r_{k-1}}^{r_k} w(r)$, we will show that:
\be
|\bra{\psi}\W_j(r_k)\ket{\psi}| \le J \cdot \hat{w}(r_k)\, \bra{\psi} G_{j}(r_k)\ket{\psi},
\ee 
where,
\be
G_{j}(r_k):=\sum_{n\ge1}n\cdot\sum_{|\mathcal{Y}|}R^{\mathcal{Y}}_j(r_k).
\ee
Define $\W_{j,a}(r_k) := \W_{u^a_j}(r_k)$, for $u^a_j \in \C_j(r_k)$. By definition of $R^j_\mathcal{Y}(r)$ and the local block-diagonality of $\W_j(r_k) = \sum_{a} \W_{j,a}(r_k)$, we have:
\begin{align}
&R^{\mathcal{Y}}_j(r_k)\W_{j}(r_k)R^{\mathcal{Z}}_j(r_k) = \delta_{\mathcal{Y,Z}} \sum_{a: \mathcal{Y}_a =1} R^{\mathcal{Y}}_j(r_k)\W_{j,a}(r_k)R^{\mathcal{Y}}_j(r_k) \implies \\
& \|R^{\mathcal{Y}}_j(r_k)\W_{j}(r_k)R^{\mathcal{Z}}_j(r_k)\| \le J \, \hat{w}(r_k) \, \delta_{\mathcal{Y,Z}}\cdot  |\mathcal{Y}|,\label{W:bound2}
\end{align}
where the equality in the first line follows from the observations: 
\begin{enumerate}
\item $\W_{j,a}(r_k)R^{\mathcal{Y}}_j(r_k) = 0$, for $\mathcal{Y}_a = 0$.
\item If $\mathcal{Y} \neq \mathcal{Z}$, then either $\mathcal{Z}_a = 0$ when $\mathcal{Y}_a=1$ and $\W_{j,a}(r_k)R^{\mathcal{Z}}_j(r_k) = 0$, or $\mathcal{Y}_b = 0$ and $\mathcal{Z}_b=1$, for $b\neq a$, so $R^{\mathcal{Y}}_j(r_k) \W_{j,a}(r_k) R^{\mathcal{Z}}_j(r_k) $ equals:
$$P_{b_{j,b}(r_k)}\left(1-P_{b_{j,b}(r_k)}\right) R^{\mathcal{Y}}_j(r_k) \W_{j,a}(r_k) R^{\mathcal{Z}}_j(r_k) = 0.$$
\end{enumerate}
The bound (\ref{W:bound2}) implies for arbitrary states $\ket{\psi}$: 
\begin{align*}
&|\bra{\psi} \W_{j}(r_k)\ket{\psi}| \le \sum_{\mathcal{Y}} |\bra{\psi} R^{\mathcal{Y}}_j(r_k)\W_{j}(r_k)R^{\mathcal{Y}}_j(r_k) \ket{\psi}|\\
&\le \sum_{\mathcal{Y}} \|R^{\mathcal{Y}}_j(r_k)\W_{j}(r_k)R^{\mathcal{Y}}_j(r)\| \bra{\psi} R^{\mathcal{Y}}_j(r_k) \ket{\psi}
 \le J  \hat{w}(r_k) \sum_{\mathcal{Y}} |\mathcal{Y}| \bra{\psi} R^{\mathcal{Y}}_j(r_k) \ket{\psi} \\
&= J  \hat{w}(r_k) \bra{\psi} G_{j}(r_k)\ket{\psi}.
\end{align*}
Recalling that $H_{b_{j,a}(r)} = \sum_{u: b_u(1) \subset b_{j,a}(r)} Q_u$, define $H_j(r) = \sum_a H_{b_{j,a}(r)}$ and note that: 
\be\label{bnd:color}
\sum_{j=1}^{N(r)} H_j(r) = \sum_{u\in \Lambda} \Big(\sum_{j\ge 1} {\rm Ind}\big(b_u(1) \subset \cup_{a\ge 1} b_{j,a}(r)\big)\Big) Q_u \le C_d \,r^d H_0,
\ee
where $C_d$ is the volume of the unit ball in $\Z^d$ and ${\rm Ind}$ denotes the indicator function. To see this, note that $\sum_{j\ge 1} {\rm Ind}\big(b_u(1) \subset \cup_{a\ge 1} b_{j,a}(r)\big)$ is bounded above by the number of balls of radius $r$ that contain $b_u(1)$, since ${\rm Ind}\big(b_u(1) \subset \cup_{a\ge 1} b_{j,a}(r)\big)$ is non-zero only when $j$ is such that $b_{j,a}(r) := b_{u_j^a}(r)$ includes $b_u(1)$ for some unique $a$, since $b_{j,a}(r) \cap b_{j,b}(r) = \emptyset$, for $a\neq b$. Thus, the indicator sum is bounded by the number of places we can put $b_u(1)$ inside a ball of radius $r$, which is the same as the number of balls of radius $r$ containing $b_u(1)$. The former is upper-bounded by the number of places we can put $b_0(u) = u$ inside a ball of radius $r$, which is exactly equal to $C_d\, r^d$.

Using the assumption that $H_0$ is locally gapped and frustration-free, we have $$H_{b_{j,a}(r)} \ge \gamma_{j,a}(r) (1-P_{b_{j,a}(r)}), \quad \mathcal{Y}_a = 0 \implies H_{b_{j,a}(r)} R^j_{\mathcal{Y}}(r) = 0.$$
Then, we can proceed as follows, for arbitrary states $\ket{\psi}$: 
\begin{align*}
&\bra{\psi} H_j(r) \ket{\psi} = \sum_{\mathcal{Y}} \sum_{a: \mathcal{Y}_a =1} \bra{\psi} R^{\mathcal{Y}}_j(r) H_{b_{j,a}(r)} R^{\mathcal{Y}}_j(r)\ket{\psi}\\
&\ge \sum_{\mathcal{Y}} \sum_{a: \mathcal{Y}_a =1} \gamma_{j,a}(r) \bra{\psi} R^{\mathcal{Y}}_j(r) (1-P_{b_{j,a}(r)}) R^{\mathcal{Y}}_j(r)\ket{\psi}\\
& \le \gamma_{j}(r) \sum_{n \ge 1} n \sum_{|\mathcal{Y}|} \bra{\psi}R^{\mathcal{Y}}_j(r)\ket{\psi}
\implies G_j(r_k) \le H_j(r_k)/ \gamma_j(r_k),
\end{align*} 
where $\gamma_{j}(r) := \min_{a} \gamma_{j,a}(r)$.
We have just shown that: $$|\bra{\psi} \W_j(r_k) \ket{\psi}| \le J \hat{w}(r_k) \bra{\psi} G_j(r_k) \ket{\psi} \le J \hat{w}(r_k)/\gamma_j(r_k)\, \bra{\psi} H_j(r_k)\ket{\psi}.$$ 
This implies that: 
\begin{align}
|\bra{\psi} W \ket{\psi}| \le \sum_{k = 1}^M \sum_{j=1}^{N(r_k)} |\bra{\psi} \W_j(r_k) \ket{\psi}|
\le J \, \sum_{k=1}^M \hat{w}(r_k) \sum_{j=1}^{N(r_k)}  \bra{\psi} H_j(r_k) \ket{\psi}/\gamma_j(r_k).
\end{align}
Setting $\gamma(r) := \min_j \gamma_j(r)$, we have:
\begin{align}
|\bra{\psi} W \ket{\psi}| \le c\, J \bra{\psi} H_0 \ket{\psi},
\end{align}
where $c = C_d \sum_{k=1}^{M} r_k^d\, \hat{w}(r_k)/\gamma(r_k)$, using (\ref{N(r):bnd}) and (\ref{bnd:color}).
\end{proof}

We note a few things about the definition of the structural constant $c$ appearing in Proposition~\ref{prop:relative_bound}: (i) The choice of the $r_k: 0 < r_1 \le \ldots \le r_M = L$ is left unspecified beyond the boundary conditions, so that one may optimize the constant $c$ given the relevant decays $w(r)$ and $\gamma(r)$. For example, if $\gamma(r)$ decays particularly fast for some values of $r$, we may want to avoid these values by summing over them within the rapid decay of $w(r)$. (ii) Assuming polynomial decay of $\gamma(r_k)$ for the selected values of $r_k$, it is clear that for fast enough decay of $w(r)$ (superpolynomial, or even fast enough polynomial decay), $c$ is a constant depending only on the dimension $d$ of the lattice and the decay rate of $\gamma(r_k)$ and $w(r)$.

\section{Final Argument:}\label{sec:theorem}
The argument below follows the bootstrapping argument in ~\cite{bravyi:2010b}, putting together all the pieces to prove stability of frustration-free systems under weak perturbations.

\begin{repthm}{thm:main_result}
Let $H_0$ be a {\it frustration-free} Hamiltonian, with spectral gap $\gamma$, defined in Section \ref{sec:ham}, satisfying the {\it Local-TQO} and {\it Local-Gap} conditions. For a $(J,f_0)$-perturbation $V$, there exist constants $J_0 >0$ and $L_0 \ge 2$, given in Proposition \ref{prop:relative_bound}, such that for $J \le J_0$ and $L \ge L_0$, the spectral gap of $H_0+V$ is bounded from below by $\gamma/2$.\end{repthm}

\begin{proof}
Let $J_0 = \frac{1}{3c}$, where $c$ is defined in Proposition~\ref{prop:relative_bound}.
Assume that for $J \le J_0$, $H_0+sV$ has gap $\gamma_s \ge \gamma/2$, for $0\le s \le s_0 < 1$ and $\gamma_s < \gamma/2$, for $s_0 < s \le s_1\le 1$. Then, we have proven in the previous sections that for $0\le s \le s_0$: 
$$U^\dagger(s) (H_0+sV) U(s) = H_0 + W_s + \Delta_s + E_s \cdot \one,$$ where $U(s)$ generates the {\it spectral flow} $U(s) P_0(0) U^\dagger(s) = P_0(s)$, and $|\bra{\psi} W_s \ket{\psi}| \le c \cdot J\,\bra{\psi} H_0 \ket{\psi}$. Moreover, $E_s \cdot \one$ is an overall energy shift and the norm of $\Delta_s$ decays rapidly in $L$. We note at this point that the \emph{Local-TQO} cut-off $L^*$ plays a crucial role in the decay estimates of $\|\Delta_s\|$, for example. In particular, $L^*$ should scale as $L^\alpha$, for some $\alpha \in (0,1]$, in order for estimates like the one in Lemma \ref{lemma:D_u} to decay rapidly in $L$, especially since the estimate for $\|\Delta_s\|$ is given by $L^d \, \sup_{u\in\Lambda} \|\Delta_u\| \sim L^d \, \max\{\Delta_0(L^*),f_0(L^*)\}$.

We will show that, for large enough $L$, the gap of $H_0+W_{s_0}+\Delta_{s_0}$, which is $\gamma_{s_0}$, is strictly greater than $\gamma/2$, contradicting the assumption that $\gamma_s < \gamma/2$, for $s_0 < s \le s_1$, implying that $s_0 =1$. 

Now, we note that $H_0 + W_s+ \Delta_s = U^\dagger(s) (H_0+sV-E_s \cdot \one) U(s)$ has eigenstates in the subspace $P_0(0)$, with energy within $\|\Delta_s\|$ from $0$. Those states correspond to the ``groundstate'' subspace $P_0(s)$ of the unrotated version of the Hamiltonian, $H_0+sV$. In general, eigenstates of $H_0+sV$ in the $P_0(s)$ subspace will have different energies, but only up to vanishingly small error in the system size, $L$.
To see this, note that for $\ket{\Psi_0(s)} = P_0(s) \ket{\Psi_0(s)}$ an eigenstate of $H_0+sV$ with energy $E_0(s)$, setting $\ket{\Psi_0} := U^\dagger(s) \ket{\Psi_0(s)}$, we have $P_0(0) \ket{\Psi_0} = U^\dagger(s) P_0(s) \ket{\Psi_0(s)} = \ket{\Psi_0}$ and $$(H_0+W_s+\Delta_s)\ket{\Psi_0} = U^\dagger(s) (H_0 + sV - E_s\cdot \one) U(s) \ket{\Psi_0} = (E_0(s)-E_s) \ket{\Psi_0}.$$ 
Moreover, since $|\bra{\Psi_0} W_s \ket{\Psi_0}| \le cJ \bra{\Psi_0} H_0 \ket{\Psi_0} = 0$, the energy $E_0(s)$, satisfies $$(H_0+W_s+\Delta_s) \ket{\Psi_0} = \Delta_s \ket{\Psi_0} = E_0(s) \ket{\Psi_0}.$$ 
Hence, $|E_0(s)|\le \|\Delta_s\| \le \gamma/13,$ for large enough $L$. 

Moreover, any state $\ket{\psi_1}$ orthogonal to the $P_0(0)$ subspace has energy at least $(1-cJ) \gamma - \|\Delta_{s}\|$, which follows from the linear relative bound on $W_s$ and the gap of $H_0$:
\begin{eqnarray*}
\bra{\psi_1} H_0+W_s+\Delta_s \ket{\psi_1} &\ge& \bra{\psi_1}H_0 \ket{\psi_1} - |\bra{\psi_1}W_s\ket{\psi_1}| - |\bra{\psi_1} \Delta_s \ket{\psi_1}|\\
&\ge& (1-cJ) \bra{\psi_1}H_0 \ket{\psi_1} - \|\Delta_s\| \ge (1-cJ) \gamma - \|\Delta_s\|.
\end{eqnarray*}
Now, taking the linear size $L$ large enough (such that $\|\Delta_s\| \le \gamma/13$) and recalling that $cJ \le 1/3$, we see that the gap is bounded below by: $$\gamma_s \ge (1-cJ) \gamma - 2\|\Delta_{s}\| \ge (20/39)\gamma > \gamma/2,$$ for all $s \in [0,s_0]$. But, by definition of $s_0$, we know that for any $0 < \epsilon \le s_1-s_0$, we have $\gamma(s_0+\epsilon) < \gamma/2$, a contradiction of the continuity of $\gamma_s$ for finite $L$. Hence, $s_0=1$ and $\gamma(1) \ge \gamma/2$, for all $J \le J_0$ and $L$ large enough to satisfy $\|\Delta_s\| \le \gamma/13$.

Finally, this implies that $H_0 + sV$ has gap $\gamma_s\ge \gamma/2$, for all $s \in [0,1]$ for the parameters of $J_0$ and $L$ given above and determined implicitly by the decay $f_0(r)$ of $V$, the \emph{Local-TQO} decay $\Delta_0(r)$, the local gap $\gamma(r)$ and the linear size $L$ of the system.
\end{proof}
{\bf Note on the stability of general gapped eigenspaces:}
We note that the above argument goes through for all gapped eigenspaces of the unperturbed Hamiltonian $H_0$, since the {\it quasi-adiabatic evolution} $U_s$ and the relative bound of Proposition~\ref{prop:relative_bound} work for any gapped eigenspace, though the constants $J_0$ and $L_0$ may be different. In particular, $J_0$ will be inversely proportional to the energy of the eigenspace under consideration (see the main theorem in \cite{bravyi:2010a} for comparison with the commuting case) and $L_0$ will depend on the spectral gap of the particular eigenspace we consider.

\section{Discussion}\label{sec:discussion}
We have shown that topological phases corresponding to the groundstate sector of gapped, frustration-free Hamiltonians are stable under quasi-local perturbations. In particular, the splitting in the energy of the groundstate subspace is bounded by a rapidly-decaying function of the system size and the gap to the high-energy sectors remains open for perturbation strengths independent of the system size. Our result also implies the stability of symmetry-protected sectors under perturbations that obey the symmetry of that sector. To see this, note that the \emph{Local-TQO} condition is only required for operators $O_A$ obeying the symmetries under consideration. Stability follows using the methods already found here, as is already discussed in~\cite{bravyi:2010b}.

Moreover, it can be shown \cite{osborne:2010} that parent Hamiltonians of Matrix Product States (MPS) satisfy the \emph{Local-TQO} condition with $\Delta_0$ decaying exponentially in these cases. Combined with the results of Nachtergaele \cite{nachtergaele:1996} and Spitzer \etal \cite{spitzer:2002} on the spectral gap of one-dimensional, frustration-free Hamiltonians, our result shows that parent Hamiltonians of MPS have stable low-energy spectrum against arbitrary, sufficiently weak, quasi-local perturbations.

Finally, we note that Lemma~\ref{lemma:global} applied to any local Hamiltonian $H'_0=\sum_u Q_u$ with spectral gap $\gamma' > 0$, transforms the initial local terms $Q_u$ into quasi-local terms $\Fz(Q_u)$ which commute with the groundstate subspace $P_0$. Subtracting the groundstate energy $E_0(Q_u) = \braket{\psi_0}{Q_u\psi_0}$ from $\Fz(Q_u)$ and taking the absolute value of each term, we get the new frustration-free Hamiltonian $H_0 = \sum_u |\Fz(Q_u)-E_0(Q_u)|$ with groundstate subspace $P_0$ and spectral gap $\gamma \ge \gamma'$ (since $H_0 \ge H'_0-E_0(H'_0)$). The new Hamiltonian has quasi-local interactions, which can be seen from the operator-norm inequality $\| |X+Y|-|X|\| \le \left(a + b \ln \frac{\|X\|+\|Y\|}{\|Y\|}\right) \|Y\|$, for constants $a,b >0$ \cite{kato:1973}, where $X$ corresponds to a localized version of $\Fz(Q_u)$ and $Y$ denotes the rapidly-decaying (in norm) remainder of the interaction. This observation suggests a path towards characterizing the stability of general gapped Hamiltonians satisfying {\it Local-TQO}: After transforming the initial gapped Hamiltonian $H'_0$ into the frustration-free, gapped Hamiltonian $H_0$, one can truncate the quasi-local terms of $H_0$ into strictly local terms at the cost of a rapidly decaying error in the localization length. At this point, one has a nearly frustration-free, gapped Hamiltonian and a more general {\it Local-TQO} condition (where $P_B(\epsilon)$ is used instead of $P_B=P_B(0)$ in Definition~\ref{def:local_tqo}) may be sufficient for a proof of stability along the lines of the current work.

We conclude with a list of open questions related to the current work and future extensions:
\begin{enumerate}
\item First, can we extend the one-dimensional methods, introduced by Nachtergaele in \cite{nachtergaele:1996}, to prove rigorous spectral gaps for frustration-free systems in two dimensions?
\item Can we use the decay rate $\Delta_0$ in \emph{Local-TQO} to classify different phases of matter? In other words, is there a canonical parent Hamiltonian construction, given a groundstate subspace, such that \emph{Local-TQO} is satisfied with an optimal decay $\Delta_0$? For example, see~\cite[Sec. 2.4]{bravyi:2010a}, where two different Hamiltonians for the \emph{toric code} give very different decays $\Delta_0$; one being constant and the other being a step function.
\item Under what conditions (maybe translation invariance, and/or scale invariance) is \emph{Local-TQO} satisfied, if the global groundstate space is topologically ordered? This is related to the previous question.
\item Can one rigorously prove that all gapped, frustration-free systems with non-vanishing, local interactions, satisfy the \emph{Local-Gap} condition, using the \emph{detectability lemma}, or some other method? 
\item What features of parent Hamiltonians of PEPS can one use to prove that the \emph{Local-TQO} and \emph{Local-Gap} conditions are satisfied?
\item Can one use the present result to prove the Haldane conjecture, by identifying a gapped, frustration-free Hamiltonian which satisfies the \emph{Local-TQO} and \emph{Local-Gap} conditions, sufficiently close to the Haldane phase \cite{kennedy:1992a}?
\item Can we identify two-dimensional, frustration-free systems satisfying \emph{Local-TQO} that exhibit Topological Order at non-zero temperatures? What about three-dimensional frustration-free systems?
\item Finally, what is the form of \emph{Local-TQO} that is truly necessary (and sufficient) for stability of the spectral gap against local perturbations, for frustrated, gapped Hamiltonians? Is the \emph{Local-TQO} condition always true for generic gapped Hamiltonians?
\end{enumerate}

\subsection*{Acknowledgments:}
S. M. would like to thank Sergey Bravyi, Matt Hastings and Tobias Osborne for encouraging the pursuit of this result, Frank Verstraete for pointing out the connection of \emph{Local-TQO} with the area-law for the entanglement entropy, Steve Flammia and Norbert Schuch for working out the details of that connection with the author, Robert Koenig and Bruno Nachtergaele for discussions on some of the technical details and Alexei Kitaev for several useful discussions on the extension of this result to general, gapped Hamiltonians. S.M. received support from NSF \#DMS-0757581 and \#PHY-0803371, and DOE Contract \#DE-AC52-06NA25396. S. M. and J. P. are grateful for the warm hospitality of Los Alamos National Lab, where part of this work was completed during the summer of $2010$, as well as to the referees whose thoughtful suggestions resulted in substantial improvement for both the clarity and the scope of this paper.

\appendix
\section{Proof of Corollary \ref{cor:local_tqo}}\label{append:local_tqo}
From \emph{Local-TQO} and the triangle inequality, we have:
\begin{align}
|\left\Vert O_{A}P_{A(\ell)}\right\Vert ^{2} - c_\ell(|O_A|^2)| &\le \|O_{A}\|^{2}\,\Delta_{0}(\ell)\label{local_tqo},\\
|\left\Vert O_{A}P_0 \right\Vert ^{2} - c_{L-r}(|O_A|^2)| &\le \|O_{A}\|^{2}\,\Delta_{0}(L-r)\label{global_tqo}, \\
| c_{L-r}(|O_A|^2) -  c_{\ell}(|O_A|^2) | &\le  \|O_{A}\|^{2}\, (\Delta_{0}(\ell) + \Delta_{0}(L-r) ).\label{trace_tqo}
\end{align}
The last inequality follows from a triangle inequality on (\ref{global_tqo}) and the bound:
\begin{align*}
\|P_0 |O_{A}|^2 P_0 - c_{\ell}\left(|O_{A}|^2\right) P_0\| = \|P_0 P_{A(\ell)}\left[|O_{A}|^2-c_{\ell}\left(|O_{A}|^2\right)\right]P_{A(\ell)} P_0\|\le
 \|P_{A(\ell)}|O_{A}|^2P_{A(\ell)} -c_{\ell}\left(|O_{A}|^2\right) P_{A(\ell)}\|.
\end{align*}
Recalling that $\Delta_{0}(\ell)$ is decaying, we have:
\begin{align*}
\left|\left\Vert O_{A}P_{A(\ell)}\right\Vert -\left\Vert O_{A}P_0\right\Vert \right|^{2} \le \left|\left\Vert O_{A}P_{A(\ell)}\right\Vert ^{2}-\left\Vert O_{A}P_0\right\Vert ^{2}\right|
\le 2 \|O_{A}\|^{2}\,(\Delta_{0}(\ell)+\Delta_{0}(L-r)) \le 4 \|O_{A}\|^{2}\,\Delta_{0}(\ell).
\end{align*}
The second bound is equivalent to showing that $\|(1- P_A) P_{A(\ell)}\|^2 \le 3 \Delta_{0}(\ell)$. If we are given two projections $P$ and $Q$ satisfying: $\|PQP-P\| \le \delta$, then we have $\|P(1-Q)P\| \le \delta \implies \|(1-Q)P\|^2 \le \delta$. In our case, we have from \emph{Local-TQO}: $\|P_{A(\ell)} P_A P_{A(\ell)} - P_{A(\ell)}\| \le 2\Delta_{0}(\ell)+\Delta_{0}(L-r)$, where we used Corollary~\ref{cor:local_energy} to get $c_{L-r}(P_A) = \Tr(P_AP_0)/\Tr P_0 = 1$ and combined it with (\ref{trace_tqo}) and a triangle inequality.

\section{Proof of Corollary \ref{cor:local_same}}\label{append:local_same}
First, we prove that \emph{Local-TQO} implies the above condition.
Set $\rho_i = \pure{\psi_i(A(\ell))}, \, i=1,2$. Then, $\rho_i(A) = \Tr_{A(\ell)\setminus A} \rho_i$ and is supported only on $A$. From \emph{Local-TQO}, for any operator $O_A$ supported on $A$:
\begin{eqnarray*}
|\Tr(\rho_1-\rho_2) O_A |
= |\bra{\psi_1(A(\ell))} O_A  -c_\ell(O_A) \ket{\psi_1(A(\ell))} - \bra{\psi_2(A(\ell))} O_A -c_\ell(O_A) \ket{\psi_2(A(\ell))}|
\le 2 \|O_A\| \Delta_{0}(\ell).
\end{eqnarray*}
Now, recalling that $\|\rho_1(A) - \rho_2(A)\|_1 = \sup_{O_A: \|O_A\|=1} |\Tr(\rho_1-\rho_2) O_A|$, we get the desired bound.
To prove the other direction, we note that it is sufficient to show that for $\ket{\Psi_0(A(\ell))}$ satisfying $P_{A(\ell)}\ket{\Psi_0(A(\ell))}=\ket{\Psi_0(A(\ell))}$, we have:
\be
|\bra{\Psi_0(A(\ell))} O_A \ket{\Psi_0(A(\ell))} - c_{\ell}(O_A)| \le 2\|O_A\| \Delta_0(\ell).
\ee
Setting $N = \Tr P_{A(\ell)}-1$ and $\rho_k(A) = \Tr_{A(\ell)\setminus A} \pure{\Psi_k(A(\ell))}, \quad 0 \le k \le N,$ where $ \{\pure{\Psi_k(A(\ell))}\}_{k = 0}^N$ is an orthogonal decomposition for $P_{A(\ell)}$, we have for $\|O_A\|=1$:
\begin{align*}
&|\bra{\Psi_0(A(\ell))} O_A \ket{\Psi_0(A(\ell))} - c_{\ell}(O_A)| = |\Tr O_A \rho_0(A) - c_{\ell}(O_A)|=\\ 
& \frac{1}{N+1} |\sum_{k=0}^N \Tr O_A (\rho_0(A) - \rho_k(A))| \le \frac{1}{N+1} \sum_{k=1}^N \|\rho_0(A)-\rho_k(A)\|_1 \le \frac{N}{N+1} \Delta_0(\ell).
\end{align*}
Hence, up to constant pre-factors, (\ref{local_tqo:2}) and \emph{Local-TQO} are equivalent.

\section{Lieb-Robinson Estimates for Lemma~\ref{lemma:local}}\label{lr-estimates2}
We follow the proof of Thm. $5.2$ in~\cite{BMNS:2011} to bound $\|\alpha_s^{b_u(r')}(O_u(r)) - \alpha_s^{b_u(r'-1)}(O_u(r))\|$ in Lemma~\ref{lemma:local}. To use the notation of ~\cite{BMNS:2011}, we set $\Lambda_n = b_{u}(n+r)$, for $n \ge 0$, so that $O_u(r)$ is supported on the set $\Lambda_{0}$. We have:
\begin{align}
&\|\alpha_s^{\Lambda_{r'}}(O_u(r)) - \alpha_s^{\Lambda_{r'-1}}(O_u(r))\| \le  \int_0^s \Vert [D_{\Lambda_{r'}}(t)-D_{\Lambda_{r'-1}}(t),\alpha_t^{\Lambda_{r'-1}}(O_u(r))]\|\, dt \\
&\le \left(2\, C_F \, \| \Psi \|_F \, \int_0^s g(t) \, dt \, \right) \|O_u(r) \| \,  \sum_{z\in\Lambda_{r'}\setminus\Lambda_{r'-1}} \sum_{x \in \Lambda_{0}} F_\Psi(d(x,z))\label{good_estimate1} \\
&+2\, \left(2\, C_F \, \| \Psi \|_F \, \int_0^s g(t) \, dt \right) \|O_u(r) \| \, \sum_{z \in\Lambda_{r'-1}\setminus \Lambda_{\lfloor (r'-1)/3\rfloor}} \sum_{x \in \Lambda_{0}} F_\Psi(d(x,z))\label{good_estimate2}\\
&+4 s\, \|\Psi\|_F \, \|O_u(r)\| \, \sum_{z\in\Lambda_{r'-1}\setminus \Lambda_{\lfloor 2(r'-1)/3\rfloor}} \sum_{x \in \Lambda_{\lfloor (r'-1)/3\rfloor}} F_\Psi(d(x,z))\label{good_estimate3}\\
&+16 s \, \|\partial \Phi \|_a F_a(0) \sum_{n \geq 1} \sqrt{G_1(n-1)}\, \|O_u(r)\| \, | \Lambda_{\lfloor 2(r'-1)/3\rfloor} | \sqrt{K(\lfloor(r'-1)/3\rfloor)},\label{bad_estimate}
\end{align}
where the following functions are defined:
\begin{equation}\label{eq:definitionFpsi}
F_{\Psi}(r) = \tilde{u}_{\mu} \left( \frac{\gamma'}{8 v_a} r \right) \,F\left(\frac{\gamma'}{8 v_a} r\right),
\end{equation} 
with $F(r) = \left(1+ r  \right)^{-(d+1)}$ and,
\bea
\tilde{u}_{\mu}(x) &=& \left\{ \begin{array}{cc} u_{\mu}(e^2) & \mbox{for } 0 \leq r \leq e^2, \\ u_{\mu}(x) & \mbox{otherwise}. \end{array} \right\},\\ 
u_{\mu}(x)&=&\exp \left\{-\mu\frac{x}{\ln^2 x}\right\}\quad \mbox{ for } \mu > 0, \quad x > 1.
\eea
Moreover,
\begin{equation} \label{eq:defg}
G_1(n) \le 4 G_2\left( \frac{\gamma' n}{2 v_a} \right) + \frac{ C_a \| F \|}{av_a} e^{-an/2} 
\end{equation}
and
\begin{equation}
K(x) \le 4 G_2\left(\frac{\gamma' |x|}{2 v_a} \right) + \frac{C_a C_{F_a} \| \Phi \|_a \| F \|}{a^2 v_a^2} e^{-a x/2},
\end{equation}
where $G_2(\zeta)$ is defined for $\zeta\geq 0$ by
$$
G_2(\zeta) = \frac{1}{\gamma'} \cdot \begin{cases}
7,354 & 0\leq \zeta \leq 36,058 \\
130\ep{2}\zeta^{10} u_{2/7}(\zeta) & \zeta > 36,058.
\end{cases}\,.
$$
Finally, the remaining constants (e.g. $C_F = 2^{d+1} \sum_{r\in \Z^d} F(r)$, $\| \Psi \|_F$, $\|\partial \Phi \|_a$, $F_a(0)$, $a$, $v_a$, etc.) are defined in~\cite[Sec. 4]{BMNS:2011}, where one also sees that:
\begin{equation}
2 C_{F} \| \Psi \|_F \, \int_0^s g(t)\, dt  \le e^{2s \| \Psi \|_{F} C_{F}} -1.
\end{equation}

We note that the decays involved (i.e. the functions $K(x)$ and $F_\Psi(r)$) are most likely not optimal, but we consider it an important exercise to give a rough estimate of the order of magnitude and the dependence to the spectral gap $\gamma'$, appearing in Definition \ref{defn:path}, of the constants involved in the decay of the perturbations we are studying. In any case, the relevant decay for our computation in Lemma \ref{lemma:local} is dominated by tail estimates of the function $u_{\mu}(r')$, which defines a sub-exponential decay with $\mu \sim \gamma'$. Here, we have sketched how one may derive rigorous estimates, using the tightest Lieb-Robinson bounds available at this point in time.

\end{document}